\renewcommand{\tilde}{\widetilde}
\newcommand{\ZZ}{{\mathbb Z}}
\newcommand{\cA}{{\mathcal A}}
\newcommand{\cF}{{\mathcal F}}
\newcommand{\psip}{\psi_+}
\newcommand{\hpsip}{\widehat \psi_+}
\newcommand{\hphi}{{\hat \phi}}
\newcommand{\hX}{{\widehat X}}
\newcommand{\hSigma}{{\widehat \Sigma}}
\newcommand{\hPhi}{{\widehat \Phi}}
\newcommand{\hg}{{\widehat g}}
\newcommand{\hb}{{\widehat b}}
\newcommand{\hpsi}{{\widehat \psi}}
\newcommand{\hD}{{\widehat D}}
\newcommand{\hf}{{\widehat f}}
\newcommand{\homega}{{\widehat \omega}}
\newcommand{\hj}{{\widehat j}}
\newcommand{\T}{{\mathsf T}}
\newcommand{\TYperp}{{(TY)^\bot}}
\newcommand{\TM}{{e^*TM}}
\newcommand{\Rp}{{\mathcal R_+}}
\newtheorem{definition}{Definition}
\newtheorem{prop}{Proposition}
\newtheorem*{thm*}{Theorem}
\newtheorem*{cor*}{Corollary}
\newtheorem*{definition*}{Definition}
\newtheorem*{prop*}{Proposition}
\font\teneusm=eusm10
\font\seveneusm=eusm7
\font\fiveeusm=eusm5
\begin{document}

\title{Geometry of Topological Defects of Two-dimensional Sigma Models}

\author{Anton Kapustin, Kevin Setter \\ {\it \small California
Institute of Technology}}

\begin{titlepage}
\maketitle

\abstract{A topological defect separating a pair of two-dimensional CFTs is a codimension one interface along which all components of the stress-energy tensor glue continuously.  We study topological defects of the bosonic, (0,1)- and (0,2)-supersymmetric sigma models in two dimensions.  We find a geometric classification of such defects closely analogous to that of A-branes of symplectic manifolds, with the role of symplectic form played instead by a neutral signature metric.  Alternatively, we find a  compact description in terms of a generalized metric on the product of the targets.  In the (0,1) case, we describe the target space geometry of a bundle in which the fermions along the defect take values.  In the (0,2) case, we describe the defects as being simultaneously A-branes and B-branes.}

\end{titlepage}

\section{Introduction} \label{sec:intro}

It is well-known that two-dimensional sigma models with differing amounts of left and right-moving supersymmetry (e.g. the (0,1) and (0,2) supersymmetric sigma models) cannot be consistently defined on worldsheets with nontrivial boundary.  This is essentially because a boundary condition must set to zero a linear combination of left- and right-moving fermions.  

This does not however rule out the possibility of defining these models on a worldsheet equipped with \emph{defects}, i.e. one dimensional submanifolds $D$ along which the fields of two distinct CFTs are glued together consistently.  (A boundary condition is a special case in which one of the CFTs is trivial.)  Our goal is to study supersymmetric defects in the (0,1) and (0,2) sigma models.    

Defects of two dimensional CFTs have attracted much interest recently, especially in the context of rational conformal field theory and as a means of elegantly implementing dualities \cite{Kapustin:2009av}, \cite{Davydov:2010rm}, \cite{Sarkissian:2008dq}, \cite{Fuchs:2007tx}.  In this paper, we will analyze criteria for  preserving superconformal symmetry, disregarding any enhanced chiral symmetry which the theories may possess; moreover, we will focus on the target space geometry of defects, and shall only briefly comment on their role in implementing dualities.   

Consider a two-dimensional worldsheet disconnected into two domains $\Sigma$ and $\hSigma$ by a defect line $D$.  On the domain $\Sigma$, one defines a sigma model of maps $\Phi : \Sigma \to X$ and on the domain $\hSigma$ one defines a sigma model of maps $\hPhi : \hSigma \to \hX$, where $X$ and $\hX$ are two compact, Riemannian target spaces.  The restriction of these maps to $D$ defines a product map $\Phi \times \hPhi |_D : D \to X \times \hX$.  As observed in \cite{Fuchs:2007fw}, gluing conditions will require the product map to takes values in some submanifold $Y \subseteq X \times \hX$ of the product of the targets.  

One may also include a term in the action coupling the bulk fields to a line bundle with connection living on the worldvolume of $Y$ (equipping it with a closed 2-form $F$), exactly by analogy with Chan-Paton bundles on branes for the case of sigma model boundary conditions.  Borrowing the terminology of \cite{Fuchs:2007fw}, we refer to the pair $(Y,F)$ as a \emph{bibrane}.

To ensure that the defect theory preserves a specified set of symmetries of the bulk theories, we impose certain requirements on the geometry of the embedding $Y$, as well as on the choice of the 2-form $F$.  Let us describe the relevant set of symmetries we wish to preserve by listing the associated No\"ether charges. The (0,1) supersymmetry algebra in 1+1 dimensions reads
\[ 
  \{Q_+,Q_+\} = H + P
\]
where $Q_+$ is the single, real, right-moving supercharge, $H$ is the worldsheet energy, and $P$ is the worldsheet momentum.  We wish to find $(Y,F)$ such that  $H$, $P$, and $Q_+$ remain as conserved charges of the defect theory; since $H$ and $P$ will be separately conserved, we will be studying examples of what are known in the literature as \emph{topological defects} \cite{Fuchs:2007tx}. 

Already in the case of the bosonic sigma model (i.e. with no fermionic fields present) it is an interesting question which bibranes $(Y,F)$ define topological defects, and it is to this that we will first direct our attention.  It turns out to be natural to choose the neutral signature metric $G = g \oplus - \hg$ on the (pseudo-Riemannian) product manifold $M = X \times \hX$, where $g$ and $\hg$ are the positive-definite metrics on $X$ and $\hX$.

Those $(Y,F)$ that supply topological defects turn out to bear a structural resemblance to A-branes of symplectic manifolds (exchanging the symplectic form on the ambient target space for the neutral signature metric defined above); for instance, just as A-branes are coisotropic submanifolds with respect to the symplectic form on the target, $Y$ will be required to be ``coisotropic" with respect to $G$, in a sense that we will explain in section \ref{sec:geometry}.  Moreover, we will describe two special classes of topological defects (the ``graphs of isometries" and ``half para-K\"ahler" defects), which are the analogs of Lagrangian and space-filling A-branes. 

Alternatively, it turns out to be natural to employ the language of Hitchin's generalized geometry \cite{Gualtieri:2007ng}, in terms of which we obtain the following simple characterization:  $(Y,F)$ define a topological defect of the bosonic sigma model if and only if the ``$F$-rotated generalized tangent bundle" of $Y$ is stabilized by the ``generalized metric" on $X \times \hX$.  We will explain what these terms mean in greater depth in section \ref{sec:geometry}.     

We refer to $(Y,F)$ satisfying the above stabilization condition as \emph{topological bibranes}; they can be defined for any neutral signature manifold $M$ and are, perhaps, mathematically interesting in their own right.  However, we should point out that the neutral signature manifolds relevant to the current physics discussion are of a very restricted type: namely, manifolds that can be expressed as a global product $M = X \times \hX$ such that the metric restricted to $TX$ (resp. $T \hX$) is positive (resp. negative) definite.

In section \ref{sec:gluing} we discuss defect gluing conditions in general and say what it means for a gluing condition to preserve a symmetry of the bulk theories.  In section \ref{sec:bosonic} we write down gluing conditions on the fields corresponding to the choice of $(Y,F)$ and analyze the topological defect requirement.  In section \ref{sec:geometry}, we explain the analogy with A-branes and reformulate the topological bibrane condition on $(Y,F)$ terms of generalized geometry.  In section \ref{sec:susy01} we analyze defects of the (0,1) supersymmetric sigma model, supplementing the bosonic gluing conditions with an additional fermionic gluing condition and studying the geometry of a certain middle dimensional subbundle of $TY$.   In section \ref{sec:susy02} we treat topological defects of the (0,2) sigma model, which we will describe as those $(Y,F)$ that are simultaneously A-branes and B-branes with respect to a certain symplectic form and complex structure.   Finally, in section \ref{sec:tduality} we briefly comment on a subset of our topological defects which implement dualities relating the sigma model theories on $\Sigma$ and $\hSigma$.  

Proofs of selected propositions discussed in the text are offered in the appendix. 

K.S. thanks Ketan Vyas for useful discussions.  This work was supported in part by the DOE grant DE-FG02-92ER-40701.

\section{Defect gluing conditions} \label{sec:gluing}

Before discussing defects in specific theories, let us discuss defect gluing conditions in general and say what it means for a defect to preserve a symmetry of the bulk theories.

The variation of the action will, in general, consist of three types of terms:
\[
  \delta S = (\delta S)_\Sigma + (\delta S)_\hSigma + (\delta S)_D
\]
where $ (\delta S)_\Sigma$ and $(\delta S)_\hSigma$ are integrals over the domains $\Sigma$ and $\hSigma$ respectively, which vanish for field configurations solving the bulk equations of motion, and the third term, $(\delta S)_D$, is an integral over $D$, which in general, will not vanish unless we impose a gluing condition on the fields along $D$.  A gluing condition constrains the values of the fields along $D$ and hence also the set of \emph{allowed variations}, by which we mean variations mapping one solution of the gluing condition to another solution.  Therefore, the first requirement of a good gluing condition is that it sets $(\delta S)_D$ to zero identically for all allowed variations and that it does so ``minimally" (i.e. without overconstraining the data along $D$).

Additionally, we may wish for the defect theory to preserve a certain symmetry of the bulk.  Let $(\sigma^0, \sigma^1)$ be worldsheet coordinates in which $D$ is described locally as the set of points with $\sigma^1 = 0$.  In these coordinates, we say that a gluing condition classically preserves a symmetry of the bulk if and only if the 1- component of the associated No\"ether current glues continuously across $D$, thereby ensuring the existence of a conserved No\"ether charge in the composite theory.  (The quantum theory may develop an anomaly, but we will confine our discussion to the classical problem.)  This being satisfied, the symmetry variation will automatically be among the allowed variations. 

For instance, conservation of the worldsheet energy $H$ requires that the off-diagonal components of the stress-energy tensors glue continuously:
\[
  T^{\, 1}_{\;\;\; 0} - {\widehat T}^{\,1}_{\;\;\; 0} = 0
\]
at points of $D$.  Defects satisfying this condition are said to be \emph{conformal defects} \cite{Bachas:2001vj}.

If, in addition, the defect gluing condition ensures that the diagonal components of the stress-energy tensor glue continuously:
\[
  T^{\, 1}_{\;\;\; 1} - {\widehat T}^{\,1}_{\;\;\; 1} = 0
\]
then worldsheet momentum $P$ is also a conserved charge.  Defects satisfying this condition are called \emph{topological defects} \cite{Fuchs:2007tx},  due to the fact that the location of $D$ on the worldsheet can be deformed smoothly without affecting the values of correlators (so long as it does not cross through the location of a local operator insertion). 

In subsequent sections, we will write down an action, vary the action, and then systematically analyze what gluing conditions set $(\delta S)_D$ minimally and ensure continuous gluing across $D$ of the 1-components of relevant No\"ether currents.    

\section{Topological defects of the bosonic sigma model} \label{sec:bosonic}

To begin, we analyze topological defects of the bosonic sigma model.

Let us fix notation.  As above, let $(\sigma^0, \sigma^1)$ be worldsheet coordinates in which the defect line $D$ is given locally by $\sigma^1 = 0$, and the worldsheet metric is taken to be flat, with signature (-, +).  Let $\Sigma$ be the domain given by $\sigma^1 \geq 0$, and $\hSigma$ be the domain  given by $\sigma^1 \leq 0$.  The fields of the bosonic sigma model with defect consist of maps $\Phi : \Sigma \to X$ and $\hPhi :\hSigma \to \hX$; in terms of local coordinates $\phi^i$ on $X$ and $\hphi^i$ on $\hX$, we can describe these maps by functions $\phi^i (\sigma)$ for $\sigma^1 \geq 0$ and $\hphi^i (\sigma)$ for $\sigma^1 \leq 0$.  (Note that the target spaces are assumed to have the same dimensionality $n$ since we are looking for topological -- not merely conformal -- defects.)  Likewise, the product map $\Phi \times \hPhi |_D : D \to X \times \hX$ can be described by the functions $\phi^I  = (\phi^i, \hphi^i) |_{(\sigma^0, 0)}$. Indices $i,j$ range from $1,\dots n$ and $I,J$ range from $1,\dots 2n$.

The total action for the theory with defect is the sum of bulk terms and a term coupling the bulk fields to a connection on a rank one vector bundle living on the submanifold $Y \subseteq X \times \hX$:
\begin{align*}
  S &= \int_{\Sigma} d^2 \sigma \Big(-\frac{1}{2} g_{ij} \partial_\mu \phi^i \partial^\mu \phi^j 
  -\frac{1}{2} b_{ij} \epsilon^{\mu \nu} \partial_\mu \phi^i \partial_\nu \phi^j \Big) \\
    &+ \int_{\hSigma} d^2 \sigma \Big(-\frac{1}{2} \hg_{ij} \partial_\mu \hphi^i \partial^\mu \hphi^j  - \frac{1}{2} \hb_{ij} \epsilon^{\mu \nu} \partial_\mu \hphi^i \partial_\nu \hphi^j \Big) 
    + \int_{D} d \sigma^0 \cA_I \partial_0 \phi^I
\end{align*}
where $\cA = \cA_I (\phi) d \phi^I$ is the connection 1-form, $g_{ij} (\phi)$ and $b_{ij}(\phi)$ are the metric and B-field on $X$, and $\hg_{ij} (\phi)$ and $\hb_{ij}(\phi)$ are the metric and B-field on $\hX$.  For simplicity we assume that both $b$ and $\hb$ are closed 2-forms.

Varying this action  and picking out the term localized on $D$, one finds
\begin{align}
  (\delta S)_D = \int_{D} d \sigma^0 \Big( &-(g_{ij} \partial_1 \phi^i + b_{ij} \partial_0 \phi^i)\, \delta \phi^j + (\hg_{ij} \partial_1 \hphi^i + \hb_{ij} \partial_0 \hphi^i)\, \delta \hphi^j \notag \\
&- (\partial_I \cA_J - \partial_J \cA_I) \, \partial_0 \phi^I \, \delta \phi^J \; \;\Big) 
\label{eq:deltaS}
\end{align}
Let $\sigma = (\sigma^0, 0)$ be a point on $D$ and let us write down gluing conditions on the fields at $\sigma$ sufficient to ensure vanishing of this expression for all allowed variations.   

As mentioned in the introduction, a choice of gluing condition corresponds to a choice of a submanifold $Y \subseteq X \times \hX$.   Let $y = \Phi \times \hPhi |_D (\sigma)$ be the image of $\sigma$ under the product map.   We impose first the Dirichlet condition 
\[
   y \in Y
\]
If $k$ is the dimension of $Y$, with $0 \leq k \leq 2n$, then the above represents $2n-k$ independent  gluing constraints on the $2n$ bosonic fields $\phi^I$.  In order to ensure vanishing of \eqref{eq:deltaS}, we supplement these Dirichlet conditions with a set of $k$ Neumann conditions on the derivatives $\partial_1 \phi^I$.  It is easiest to describe these in terms of three target space tangent vectors $u,v,w \in T_y ( X \times \hX )$, with 
\begin{align*}
  u &= u^I \partial_I = (\partial_0 \phi^I ) \partial_I \\
  v &= v^I \partial_I = (\partial_1 \phi^I ) \partial_I \\
  w &= w^I \partial_I = (\delta \phi^I) \partial_I 
\end{align*}
where $\partial_I \equiv \frac{\partial}{\partial \phi^I}$.  Constraining points of $D$ to be mapped to the submanifold $Y$ requires $u \in TY$, the subspace of vectors tangent to $Y$; moreover, allowed variations $\delta \phi^I$ are those such that $w \in TY$ as well. 

Here and throughout, we shall only have occasion to discuss vectors that are evaluated at points of $Y$; hence, we regard all vectors as lying in the pullback bundle $\TM$, where $e:Y \to M = X \times \hX$ is the embedding map, and shall regard  $TY$ as a subbundle $TY \subseteq \TM$.  
%We leave the point of evaluation $y \in Y$ implicit.    

Since the target space metrics $g$ and $\hg$ enter  $(\delta S)_D$ with opposite signs, we find it useful to define a \emph{neutral signature} metric and B-field on $X \times \hX$ by
\[
  G_{I J} = \begin{pmatrix} g_{ij} & 0 \\ 0 & -\hg_{ij} \end{pmatrix}, \quad
  B_{I J} = \begin{pmatrix} b_{ij} & 0 \\ 0 & -\hb_{ij} \end{pmatrix}
\]
Moreover, we write $F = -\cF - e^* B$ for closed 2-form on $Y$ obtained by combining the curvature $\cF = d\cA$ of the line bundle and the pullback of the B-field.  Using these definitions, we may state the Neumann conditions as follows:
\[
   G(u,w) = F(v,w)
\]
for all $w \in TY$.  As promised this represents $k$ independent conditions, one for each linearly independent tangent vector $w$.
\begin{definition}
Together, the pair $(u,v)$ satisfying the above gluing conditions are said to be an \emph{allowed pair} of tangent vectors.  
\end{definition}
Having written down good gluing conditions on the fields, let us now analyze what additional constraints the topological defect condition places on allowed pairs; this will constrain the choice of submanifold $Y$ as well as the choice of 2-form $F$ on its worldvolume.  First of all, the conformal defect condition $ T^{\, 1}_{\;\;\; 0} - {\widehat T}^{\,1}_{\;\;\; 0} = 0$ is equivalent to
\[
  G(u,v) = 0
\]
for all allowed pairs $(u,v)$.  This is an automatic consequence of the antisymmetry of $F$ since, if $(u,v)$ is an allowed pair, then $ G(u,v) = G(v,u) = F(u,u) = 0$, where we have applied the definition of allowed pair, setting $w = u$.  Therefore, $(Y,F)$ automatically defines a conformal defect of the bosonic sigma model.

Further requiring the topological condition  $T^{\, 1}_{\;\;\; 1} - {\widehat T}^{\,1}_{\;\;\; 1} = 0 $ is equivalent to requiring
\[
  G(u,u) = - G(v,v)
\]
for all allowed pairs $(u,v)$.  This condition turns out to be equivalent to an apparently stronger condition, as follows.
\begin{prop}  \label{prop:stronger}
If $(Y,F)$ are such that $G(u,u) = -G(v,v)$ for all allowed pairs $(u,v)$, then $v \in TY$ and $(v,u)$ is an allowed pair as well.
\end{prop}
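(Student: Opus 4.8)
The plan is to argue in the fibre over a point and to exploit the fact that there the allowed pairs form a \emph{linear} subspace. Fix $y\in Y$ and let $V=(\TM)_y$, $W=(TY)_y\subseteq V$, so $G$ restricts to a nondegenerate symmetric form on $V$; an allowed pair is then a pair $(u,v)\in W\times V$ obeying the Neumann relation $G(v,w)=F(u,w)$ for all $w\in W$. Both the tangency requirement $u\in W$ and this relation are linear in $(u,v)$ — here one uses that $F(u,w)$ is linear in $u$, since $F$ is a $2$-form — so the allowed pairs form a linear subspace $\mathcal P\subseteq W\times V$. Two structural facts about $\mathcal P$ will do the work: (i) every $u\in W$ occurs as the first entry of some allowed pair, because $v\mapsto G(v,\cdot)|_W$ is the composite of the isomorphism $G^{\flat}\colon V\to V^{*}$ with the (surjective) restriction $V^{*}\to W^{*}$, so $G(v,\cdot)|_W=\iota_u F$ is always solvable for $v$; and (ii), the special case $u=0$ of (i): $(0,v')$ is allowed exactly when $v'\in W^{\bot}$, the $G$-orthogonal complement of $W$ in $V$, so that $\{0\}\times W^{\bot}\subseteq\mathcal P$.

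Next I would polarize the hypothesis. Since $\mathcal P$ is linear and $G(u,u)+G(v,v)$ vanishes on all of it, expanding this quadratic form on the sum of two allowed pairs gives
\[
  G(u,u')+G(v,v')=0\qquad\text{for all }(u,v),(u',v')\in\mathcal P .
\]
(The conformal identity $G(u,v)=0$, automatic from antisymmetry of $F$, polarizes the same way to $G(u,v')+G(u',v)=0$; it will not actually be needed below.)

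The proposition then follows in two short steps. \emph{First}, $v\in W$: for an allowed pair $(u,v)$, apply the polarized identity to it and to $(0,v')$ for an arbitrary $v'\in W^{\bot}$, which is allowed by (ii); this yields $G(v,v')=0$ for every $v'\in W^{\bot}$, and nondegeneracy of $G$ then forces $v\in(W^{\bot})^{\bot}=W$. \emph{Second}, $(v,u)$ is an allowed pair: its first entry now lies in $W$, so only the Neumann relation remains to check. Fix $w\in W$; using (i), choose $v'$ with $(w,v')$ allowed, i.e. $G(v',\tilde w)=F(w,\tilde w)$ for all $\tilde w\in W$; applying the polarized identity to $(u,v)$ and $(w,v')$ gives $G(u,w)=-G(v,v')$; and since $v\in W$ we may set $\tilde w=v$ in the relation for $v'$, so $G(v,v')=F(w,v)=-F(v,w)$. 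Hence $G(u,w)=F(v,w)$ for all $w\in W$, which is exactly the Neumann relation for the pair $(v,u)$.

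I expect the only real subtlety to be the first step: one must notice that the ``trivial'' allowed pairs $(0,v')$ with $v'$ $G$-orthogonal to $W$ are precisely the lever that turns the polarized quadratic hypothesis into $v\perp_G W^{\bot}$, i.e. $v\in W$; everything afterwards is bookkeeping with the two defining relations. (The same content can be repackaged geometrically: $\mathcal P$ is an isotropic subspace of $(V\oplus V,\,G\oplus G)$ of dimension $2n=\tfrac12\dim(V\oplus V)$, hence Lagrangian, and the involution $(u,v)\mapsto(v,u)$ preserves $G\oplus G$ and — by the polarized conformal identity — sends $\mathcal P$ into $\mathcal P^{\bot}=\mathcal P$; the route above avoids even this dimension count.)
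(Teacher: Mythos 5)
Your proof is correct and follows essentially the same route as the paper's: polarize the quadratic hypothesis over the linear space of allowed pairs, use the trivial pairs $(0,v')$ with $v'\in\TYperp$ to conclude $v\in TY$, and then pair $(u,v)$ against an allowed pair $(w,v')$ with prescribed first entry to obtain the Neumann relation for $(v,u)$. The only differences are cosmetic: you bypass the paper's intermediate observation that $\TYperp$ is isotropic (coisotropy of $Y$), replacing its $t$-derivative argument by the same polarization identity, and you spell out the existence of an allowed pair over any given $w\in TY$, which the paper leaves as ``not hard to see.''
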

(See appendix for the proof of this proposition and other proofs which are not immediate.)  The converse is true as well, since if  $(Y,F)$  satisfy the conditions of this proposition, then by setting $w=u$, one has $G(u,u) = F(v,u) = -F(u,v) = -G(v,v)$.   Hence, we arrive at the following:
\begin{prop} \label{prop:topbibrane}
Suppose $Y \subseteq X \times \hX$ is a submanifold and $F$ is a closed 2-form on its worldvolume.  Then $(Y,F)$ supplies a topological defect of the bosonic sigma model if and only if the following condition is met:
\begin{itemize}
\item[] If $(u,v)$ is a pair of vectors such that $u \in TY, v \in \TM$ and $G(v,w)=F(u,w)$ for all $w \in TY$, then
\item[] $v \in TY$ as well, and $G(u,w) = F(v,w)$ for all $w \in TY$. 
\end{itemize}
\end{prop}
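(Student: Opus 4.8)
The plan is to reduce the statement to two facts already established above and then invoke Proposition~\ref{prop:stronger}. By definition $(Y,F)$ supplies a topological defect precisely when it supplies a conformal defect, $T^{\,1}_{\ 0}-\widehat T^{\,1}_{\ 0}=0$, and in addition the diagonal stress-tensor components glue continuously, $T^{\,1}_{\ 1}-\widehat T^{\,1}_{\ 1}=0$. The conformal condition is automatic for any $(Y,F)$ of the type considered: for an allowed pair $(u,v)$ one sets $w=u$ in $G(v,w)=F(u,w)$ and uses antisymmetry of $F$ to get $G(u,v)=G(v,u)=F(u,u)=0$. And, as recorded above, the diagonal condition is equivalent to the scalar identity $G(u,u)=-G(v,v)$ holding for \emph{every} allowed pair. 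So the proposition is equivalent to the assertion that the condition ``$G(u,u)=-G(v,v)$ for all allowed pairs'' and the displayed stabilization condition imply one another, and I would prove the two implications in turn.

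For the implication from the quadratic constraint to the stabilization condition, let $(u,v)$ be any pair with $u\in TY$, $v\in\TM$, and $G(v,w)=F(u,w)$ for all $w\in TY$; this is exactly the defining condition of an allowed pair, so Proposition~\ref{prop:stronger} applies verbatim and gives both $v\in TY$ and the assertion that $(v,u)$ is again an allowed pair, i.e.\ $G(u,w)=F(v,w)$ for all $w\in TY$ --- which is precisely the claimed conclusion, so nothing further is needed on this side. Conversely, assume the stabilization condition and let $(u,v)$ be an arbitrary allowed pair, so that $u\in TY$, $v\in\TM$, and $G(v,w)=F(u,w)$ for all $w\in TY$. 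By hypothesis $v\in TY$ and $G(u,w)=F(v,w)$ for all $w\in TY$; setting $w=u$ here gives $G(u,u)=F(v,u)=-F(u,v)$, and setting $w=v$ in $G(v,w)=F(u,w)$ --- legitimate now that $v\in TY$ --- gives $G(v,v)=F(u,v)$, so adding yields $G(u,u)=-G(v,v)$. Since $(u,v)$ was arbitrary, $T^{\,1}_{\ 1}-\widehat T^{\,1}_{\ 1}=0$, and together with the automatic conformal property this shows $(Y,F)$ is a topological defect.

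So the proof of the proposition itself is pure bookkeeping together with two one-line substitutions; the genuine content is Proposition~\ref{prop:stronger}, which we may take as given here. I expect that proposition to be the real obstacle, and it should go roughly as follows: the allowed pairs over a fixed point of $Y$ form a linear subspace, so the scalar identity $G(u,u)=-G(v,v)$ polarizes to $G(u_1,u_2)=-G(v_1,v_2)$; for a fixed $u$ the compatible $v$'s sweep out a coset of $\TYperp$, and running the scalar identity over that coset forces $\TYperp$ to be $G$-isotropic and $v$ to be $G$-orthogonal to $\TYperp$, hence $v$ lies in the $G$-orthogonal complement of $\TYperp$, which is $TY$ since $G$ is nondegenerate on $\TM$; finally, nondegeneracy of $G$ makes the restriction map $v'\mapsto G(v',\,\cdot\,)|_{TY}$ surjective onto $(TY)^{*}$, so every $w\in TY$ extends to an allowed pair $(w,v')$, and polarizing that against $(u,v)$ returns $G(u,w)=F(v,w)$. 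Granting Proposition~\ref{prop:stronger}, the assembly above is all that the present proposition needs.
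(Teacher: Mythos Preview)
Your proposal is correct and follows essentially the same route as the paper: the paper likewise observes that the conformal condition is automatic, that the topological condition is $G(u,u)=-G(v,v)$ for all allowed pairs, invokes Proposition~\ref{prop:stronger} for the forward direction, and for the converse performs exactly your two substitutions (compressed into the single chain $G(u,u)=F(v,u)=-F(u,v)=-G(v,v)$). Your sketch of the proof of Proposition~\ref{prop:stronger} also matches the paper's appendix argument.
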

\begin{definition}
A pair $(Y,F)$ satisfying the conditions of the preceding theorem are said to define a \emph{topological bibrane}. 
\end{definition}

\section{Geometry of topological bibranes} \label{sec:geometry}

Having obtained a characterization of topological bibranes $(Y,F)$ above, we reformulate this condition slightly to put it in a more understandable form.  

We have chosen to equip the manifold $M = X \times \hX$ with a neutral signature metric $G$; let us therefore record some basic facts from the theory of submanifolds of indefinite signature spaces.  See \cite{Bejancu:2006p569} for a more complete discussion. 
\begin{definition}
Let $Y$ be an embedded submanifold of a $2n$-dimensional pseudo-Riemannian manifold $M$ equipped with a nondegenerate, symmetric metric $G$.  The orthogonal subbundle $\TYperp$ is defined to the be the set of vectors that are $G$-orthogonal to all of $TY$:
\[
   \TYperp = \{ u \in \TM : G(u,v) = 0 \qquad \text{for all $v \in TY$}\}
\]
\end{definition}
If $TY$ is $k$-dimensional, then $\TYperp$ is $(2n-k)$-dimensional.  The main difference between the theory of pseudo-Riemannian submanifolds as compared with the theory of Riemannian submanifolds is that the restriction of the metric $G$ to $Y$ can develop degenerate directions.  Hence, the subbundle $\Delta = TY \cap \TYperp$ will in general be nontrivial.  Borrowing terminology from symplectic geometry, we define the following three special classes of submanifolds:
\begin{definition} 
Depending on whether the bundle $TY$, regarded as a subbundle of $\TM$, contains (or is contained by) its orthogonal bundle $\TYperp$, we say that  
\begin{itemize}
\item[] $TY$ is \emph{isotropic} if $TY \subseteq \TYperp$ 
\item[] $TY$ is \emph{coisotropic} if $TY \supseteq \TYperp$ 
\item[] $TY$ is \emph{Lagrangian} if $TY = \TYperp$ 
\end{itemize}
\end{definition}
Hence, the dimension $k$ lies in the range $0 \leq k \leq n$ for isotropic subbundles, $n \leq k \leq 2n$ for coisotropic subbundles, and all Lagrangian subbundles are $n$ dimensional.  (Lagrangian subbundles exist only when the signature of $G$ is $(n,n)$.)  In the proof of proposition \ref{prop:stronger} it was shown that the submanifolds $Y$ corresponding to topological bibranes have the property that orthogonal vectors are also tangent to the submanifold; in our classification, they are coisotropic submanifolds.  Therefore, in the following we focus on coisotropic submanifolds (which include Lagrangian submanifolds as a special case).    

It is convenient to define a particular frame for $\TM$ adapted to the submanifold $Y$.  This is not quite as straightforward as in the Riemannian case, since a canonical splitting $\TM = TY \oplus \TYperp$ is no longer available.   However,  for coisotropic submanifolds there exists \cite{Bejancu:1995p230} a splitting of the tangent bundle of the form
\[
  \TM = \underbrace{\TYperp \oplus SY}_{TY} \oplus NY
\]
where $SY$ is a complementary \emph{screen} distribution to $\TYperp$ within $TY$ and $NY$ is a complementary \emph{transverse} distribution to $TY$ in $\TM$.  We have $\dim \TYperp = \dim NY = 2n-k$ and $\dim SY = 2k - 2n$.

Making use of this adapted frame, let us now write down an equivalent characterization of topological bibranes:
\begin{prop} \label{prop:coisotropic}
The pair $(Y,F)$ define a topological bibrane if and only if $Y \subseteq X \times \hX$ is a coisotropic submanifold such that $\ker F = \TYperp$ (i.e. the degenerate directions of $F$ and $G|_{TY}$ coincide) and, additionally,
\[
  (\tilde G^{-1} \tilde F)^2 = +1
\]
on $SY$, where $\tilde G \equiv G|_{SY}$ and  $\tilde F \equiv F|_{SY}$.    
\end{prop}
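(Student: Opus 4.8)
The plan is to argue pointwise in a fiber of $\TM$ over $y\in Y$, working throughout in the adapted splitting $\TM=\TYperp\oplus SY\oplus NY$; recall that coisotropy of $Y$ is available in both directions (it is part of the hypothesis in the ``if'' direction, and in the ``only if'' direction it follows from the proof of Proposition~\ref{prop:stronger}, or directly by plugging $u=0$ into Proposition~\ref{prop:topbibrane}). The single structural fact about $G$ that I will use repeatedly is that, since $G$ is nondegenerate on $\TM$, the contraction $v\mapsto G(v,\cdot)|_{TY}$ is a surjection $\TM\to(TY)^{*}$ with kernel exactly $\TYperp$; hence for any $u\in TY$ the vectors $v$ with $G(v,w)=F(u,w)$ for all $w\in TY$ form a nonempty coset of $\TYperp$, all of whose representatives lie in $TY$ as soon as one of them does (because $\TYperp\subseteq TY$).

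For the forward implication I assume $(Y,F)$ is a topological bibrane and first isolate $\ker F$. If $F(u,\cdot)|_{TY}=0$ then $(u,0)$ satisfies the hypothesis of Proposition~\ref{prop:topbibrane}, whose conclusion forces $G(u,\cdot)|_{TY}=0$, i.e.\ $u\in TY\cap\TYperp=\TYperp$; thus $\ker F\subseteq\TYperp$. Conversely, for $\delta\in\TYperp$ I pick (using the surjectivity above) a partner $v$ with $G(v,\cdot)|_{TY}=F(\delta,\cdot)|_{TY}$; Proposition~\ref{prop:topbibrane} then gives $v\in TY$ and $G(\delta,\cdot)|_{TY}=F(v,\cdot)|_{TY}$, but $G(\delta,\cdot)|_{TY}=0$ since $\delta\in\TYperp$, so $v\in\ker F\subseteq\TYperp$ and hence $G(v,\cdot)|_{TY}=0$, which forces $F(\delta,\cdot)|_{TY}=0$. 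Therefore $\ker F=\TYperp$, and consequently $\tilde F=F|_{SY}$ is nondegenerate (as is $\tilde G$, $SY$ being a complement to the radical $\TYperp$ of $G|_{TY}$). For the remaining relation I take $u\in SY$ and a partner $v\in TY$, decompose $v=v_{\perp}+v_{S}$ along $\TYperp\oplus SY$, and restrict both equations to $w\in SY$: since $\TYperp$ is $G$-orthogonal to $TY$ and annihilated by $F$, the first equation collapses to $v_{S}=\tilde G^{-1}\tilde F\,u$ and the second to $\tilde G(u,\cdot)=\tilde F(\tilde G^{-1}\tilde F\,u,\cdot)$ on $SY$, i.e.\ $(\tilde G^{-1}\tilde F)^{2}u=u$.

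For the converse I assume the three conditions and verify Proposition~\ref{prop:topbibrane} directly. Because $\ker F=\TYperp$, $F|_{TY}$ is the pullback of $\tilde F$ under the projection $TY\to SY$ with kernel $\TYperp$, so given $u\in TY$ with $SY$-component $u_{S}$ one checks that $v:=\tilde G^{-1}\tilde F\,u_{S}\in SY\subseteq TY$ is a partner; any other partner differs from it by an element of $\TYperp\subseteq TY$, which yields the claim $v\in TY$ and leaves $F(v,\cdot)|_{TY}$ unchanged. The symmetric identity $G(u,\cdot)|_{TY}=F(v,\cdot)|_{TY}$, restricted to $w\in SY$, becomes $\tilde G(u_{S},\cdot)=\tilde F(\tilde G^{-1}\tilde F\,u_{S},\cdot)$, i.e.\ $(\tilde G^{-1}\tilde F)^{2}u_{S}=u_{S}$, true by hypothesis.

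The one genuinely substantive step is the identification $\ker F=\TYperp$ in the forward direction; everything else is bookkeeping with the adapted frame — tracking how $G$ and $F$ pair the three summands (both forms kill every $\TYperp$-direction once $\ker F=\TYperp$ is known, and $NY$ never appears because all pairings are against $w\in TY$) and passing cleanly to the nondegenerate $SY$-block on which the para-complex-type relation $(\tilde G^{-1}\tilde F)^{2}=+1$ actually lives.
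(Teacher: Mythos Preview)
Your proof is correct and follows essentially the same route as the paper's own argument: in the ``only if'' direction you use the bibrane symmetry on pairs of the form $(u,0)$ and $(0,\delta)$ to pin down $\ker F=\TYperp$ (the paper does this in one line by citing Proposition~\ref{prop:stronger}), then pass to the nondegenerate screen block to read off $(\tilde G^{-1}\tilde F)^{2}=1$; in the ``if'' direction you decompose along $\TYperp\oplus SY$ and reduce everything to the $SY$-block, exactly as the paper does. Your write-up is slightly more explicit in places (the surjectivity of $v\mapsto G(v,\cdot)|_{TY}$, the two inclusions for $\ker F$), but there is no substantive difference in strategy.
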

This is remarkably similar to the characterization of A-brane boundary conditions of the $(2,2)$ supersymmetric sigma model given in \cite{Kapustin:2001ij}, where the role of the antisymmetric symplectic form $\Omega$ on the target is exchanged for a neutral signature, symmetric metric $G$. Just as A-branes are required to be coisotropic with respect to $\Omega$, topological bibranes are required to be coisotropic with respect to $G$.  Moreover,  for A-branes the quotient bundle $TY / (TY)^\Omega$ is equipped with an endomorphism given by $\Omega^{-1} F$ and squaring to -1; similarly, for topological bibranes the quotient bundle $TY / \TYperp$ is equipped with an endomorphism $G^{-1} F$ squaring to +1. 

We consider two important special classes of topological bibranes.  When $F=0$, the condition that the degenerate directions of $F$ and $G|_{TY}$ agree implies that $TY = \TYperp$, i.e. $Y$ is a Lagrangian submanifold of $X \times \hX$.   This in turn implies that, locally, $Y$ is the graph of an isometry $f: X \to \hX$.   These \emph{graph-of-isometry} type bibranes are the  analogs of Lagrangian A-branes.   

The other special class of topological bibranes are the \emph{space-filling} bibranes with $Y = X \times \hX$.   In this case, $\TYperp$ is trivial and the screen distribution $SY$ is all of $TY$.  The condition in proposition \ref{prop:coisotropic}  then implies that $F$ is a symplectic form such that $(G^{-1} F)^2 = 1$.  Space-filling bibranes are the topological bibrane analog of space-filling A-branes.  

Indeed, if a manifold $M$ carries a symplectic form $F$, an almost product structure $R$, and a neutral signature metric $G$ with the compatibility requirement
\[
   F(u,v) = G(R u, v)
\]
for all vector fields $u,v$, and if in addition one demands that
\[
   dF = 0
\]
then $M$ is said to be an \emph{almost para-K\"ahler manifold} \cite{Alekseevsky:2009p273}.  Alternatively, the triplet of structures $(F,R,G)$ satisfying the above is known as an \emph{almost bi-Lagrangian structure} on the manifold \cite{Etayo:2006p347}, which terminology is inspired by the fact that the +1 and -1 eigenbundles of $R$ form two complementary Lagrangian subbundles with respect to the symplectic form $F$.  In case one of these subbundles is integrable, we call the manifold \emph{half para-K\"ahler} and in case both subbundles are integrable, we call the manifold  \emph{para-K\"ahler}.  Integrability of one of the eigenbundles need not imply integrability of the other eigenbundle.  Indeed, in section \ref{sec:susy01} we will require integrability of just the positive eigenbundle in order to ensure supersymmetry.   The study of para-K\"ahler manifolds is a rich and developing subbranch of indefinite signature geometry; they have appeared in an unrelated physical context in \cite{Cortes:2003zd}.

Another useful way to reformulate the topological bibrane condition is in terms of the language of \emph{generalized geometry} (see  \cite{Gualtieri:2007ng} for a review).  In the framework of generalized geometry one only works with the sum $TM \oplus T^* M$ as well as sections thereof (rather than $TM$ or $T^*M$ in isolation).  In particular, one speaks of \emph{generalized tangent vectors}  of $M$, defined as pairs $(u,\xi)$ with $u \in TM$, $\xi \in T^* M$.

The objects of ordinary of geometry have generalized counterparts; for instance, one defines the \emph{generalized tangent subbundle} of a submanifold $Y \subseteq M$ to be sum of its tangent bundle and conormal bundle (regarded as a subbundle of $T^*M$):
\[
  \tau Y = TY \oplus N^*Y  
\] 
Generalized geometry provides a prescription for incorporating a nonzero 2-form $F$:
\begin{definition}
The \emph{F-rotated generalized tangent subbundle} of a submanifold $Y \subseteq M$ is the set of generalized tangent vectors $(u,\xi)$ such that 
\[
  \tau^F Y = \{ (u,\xi): u \in TY, \xi \in T^*M \quad \text{such that $F u = e^*\xi$} \}
\]
Here $e:Y \to M$ is the embedding map and $F u$ represents the 1-form produced by contracting the 2-form $F$ on the vector $u$.
\end{definition}
What we have been calling allowed pairs of tangent vectors $(u,v)$ are nothing but sections of $\tau^F Y$ (after lowering the vector $v$ to a 1-form $\xi = G v$).

The generalized counterparts of complex structures are the \emph{generalized complex structures}: endomorphisms $\mathcal J : TM \oplus T^*M \to TM \oplus T^*M$ with $\mathcal J ^2 = -1$.  Symplectic manifolds with symplectic form $\Omega$ carry a generalized complex structure given by
\[
  \mathcal J_\Omega = \begin{pmatrix} & -\Omega^{-1} \\ \Omega & \end{pmatrix}
\]
(referring coordinates on $T^*M$ to a dual basis) and the A-branes are elegantly characterized as those $(Y,F)$ such that the F-rotated generalized tangent subbundle is stabilized by the $\mathcal J_\Omega$:
\begin{prop}(in reference \cite{Gualtieri:2007ng})
(Y,F) define an A-brane of a symplectic manifold if and only if  $\mathcal J_\Omega$ acting on a section of $\tau^F Y$  gives back another section of $\tau^F Y$, i.e.
\[
  \mathcal J_\Omega (\tau^F Y) = \tau^F Y
\]
\end{prop}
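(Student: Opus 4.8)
\emph{Proof strategy.} Gualtieri's own argument is phrased in terms of Dirac structures and pure spinors; the plan here is a more pedestrian route that parallels the analysis leading to Proposition~\ref{prop:topbibrane}. I would reduce the bundle identity $\mathcal J_\Omega(\tau^F Y)=\tau^F Y$ to the familiar pointwise description of A-branes --- $Y$ coisotropic with $\ker F=(TY)^\Omega$ and $(\Omega^{-1}F)^2=-1$ on $TY/(TY)^\Omega$, the exact symplectic mirror of Proposition~\ref{prop:coisotropic} --- by linear algebra carried out fibrewise over $Y$. First, since $\mathcal J_\Omega$ is a bundle automorphism of $TM\oplus T^*M$ and $\tau^F Y$ has constant rank $\dim M$ (for each $u\in TY$ the covectors $\xi$ with $e^*\xi=Fu$ form a coset of $N^*Y$), it suffices to establish the inclusion $\mathcal J_\Omega(\tau^F Y)\subseteq\tau^F Y$; equality then follows for dimensional reasons. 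Unwinding the definition of $\mathcal J_\Omega$, for $(u,\xi)\in\tau^F Y$ one has $\mathcal J_\Omega(u,\xi)=(-\Omega^{-1}\xi,\ \Omega u)$, and membership of this pair in $\tau^F Y$ amounts to the conjunction of \textbf{(i)} $\Omega^{-1}\xi\in TY$ and \textbf{(ii)} $e^*(\Omega u)=-\,\iota_{\Omega^{-1}\xi}F$.

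The crux is to handle the freedom of extending $Fu$ off $TY$. The covectors $\xi$ occurring in $\tau^F Y$ above a fixed $u$ differ by elements of $N^*Y$, and $\Omega^{-1}(N^*Y)=(TY)^\Omega$. Hence imposing (i) on all sections forces, upon taking $u=0$, the inclusion $(TY)^\Omega\subseteq TY$, i.e.\ $Y$ is coisotropic; and then for general $u$ it demands that $\iota_u F$, regarded as a covector on $TY$, lie in the image of $w\mapsto\Omega(w,\cdot)|_{TY}$, which for coisotropic $Y$ is precisely the annihilator of $(TY)^\Omega$ --- equivalently $(TY)^\Omega\subseteq\ker F$. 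Granting this, one may pick for each $u\in TY$ a vector $w_u\in TY$, unique modulo $(TY)^\Omega$, with $\Omega(w_u,\cdot)|_{TY}=\iota_u F$ (the remaining ambiguity drops out of everything below because $(TY)^\Omega\subseteq\ker F$); then any $\xi$ over $u$ equals $\Omega w_u$, so $\Omega^{-1}\xi=w_u$. Passing to $Q:=TY/(TY)^\Omega$ with its induced symplectic form $\omega$, the assignment $[u]\mapsto[w_u]$ is exactly the endomorphism $\mathcal I:=\omega^{-1}F$, equivalently $F=\omega\circ\mathcal I$ as maps $Q\to Q^*$.

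Finally I would feed this into (ii). As $u\in TY$ and $Y$ is coisotropic, $\Omega(u,\cdot)|_{TY}$ annihilates $(TY)^\Omega$ and descends to $\omega([u])\in Q^*$, while the right-hand side of (ii) descends to $-F([w_u])=-F(\mathcal I[u])=-\omega(\mathcal I^2[u])$. Thus (ii) for all sections is equivalent to $\omega=-\omega\circ\mathcal I^2$ on $Q$, i.e.\ $\mathcal I^2=-1$. Conversely, $\mathcal I^2=-1$ makes $\mathcal I$ invertible, so $F$ is nondegenerate on $Q$ and $(TY)^\Omega\subseteq\ker F$ becomes an equality; running the identities backwards then shows that coisotropicity, $\ker F=(TY)^\Omega$ and $\mathcal I^2=-1$ do imply (i) and (ii). This gives the claimed equivalence. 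I expect the main obstacle to be precisely the middle step --- correctly identifying the endomorphism induced on $TY/(TY)^\Omega$ and verifying that the extension ambiguity ($N^*Y$ on the covector side, $(TY)^\Omega$ on the vector side) is genuinely annihilated by $F$, so that (i) and (ii) are even well posed; once that is secured, the computation collapses to $\mathcal I^2=-1$ as above. Note that the statement is purely pointwise, so closedness of $F$ is irrelevant to it.
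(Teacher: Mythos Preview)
The paper does not actually prove this proposition: it is quoted from Gualtieri's thesis and no argument is supplied in the text or the appendix. So there is no ``paper's proof'' to compare against in the strict sense; the relevant benchmark is Gualtieri's treatment, which (as you note) goes through Dirac structures and pure spinors.

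Your argument is correct and gives a genuinely different, more elementary route. You reduce the stabilization condition $\mathcal J_\Omega(\tau^F Y)=\tau^F Y$ to fibrewise linear algebra and recover exactly the Kapustin--Orlov conditions (coisotropicity, $\ker F=(TY)^\Omega$, $(\Omega^{-1}F)^2=-1$ on the quotient), in perfect analogy with Proposition~\ref{prop:coisotropic}. The key steps --- identifying $\Omega^{-1}(N^*Y)=(TY)^\Omega$, reading off $(TY)^\Omega\subseteq\ker F$ from the surjectivity constraint in (i), and descending (ii) to $\mathcal I^2=-1$ on $Q=TY/(TY)^\Omega$ --- are all sound, and you correctly observe that the extension ambiguity on the covector side matches the $(TY)^\Omega$ ambiguity on the vector side, so that everything passes cleanly to the quotient. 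One small point of phrasing: when you write ``any $\xi$ over $u$ equals $\Omega w_u$, so $\Omega^{-1}\xi=w_u$,'' this is true provided $w_u$ is understood as the specific representative $\Omega^{-1}\xi\in TY$ of the class in $Q$; you acknowledge this, but it would read more cleanly to say that $\Omega^{-1}\xi$ \emph{is} a valid choice of $w_u$ rather than that it equals a previously chosen one.

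What your approach buys over Gualtieri's is transparency of the parallel with the neutral-metric case: replacing $\Omega$ by $G$ and $-1$ by $+1$ in your computation literally reproduces the proof of Proposition~\ref{prop:coisotropic}, making the A-brane/topological-bibrane analogy manifest at the level of proofs rather than just statements. Gualtieri's machinery, on the other hand, situates the result within the general theory of generalized complex submanifolds and makes the integrability side (closedness of $F$, Courant involutivity of $\tau^F Y$) more systematic --- something your pointwise argument deliberately sets aside.
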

Motivated by our theme of comparing topological bibranes of the bosonic sigma model with A-branes of the $(2,2)$ sigma model by replacing $\Omega$ with the neutral signature $G$ and replacing almost complex structures with almost product structures, let us define a \emph{generalized (almost) product structure} to be an endomorphism $\mathcal R : TM \oplus T^*M \to TM \oplus T^*M$ with $\mathcal R ^2 = 1$.  Indeed, the generalized geometry object encoding the metric is a generalized almost product structure:
\begin{definition}
As defined in \cite{Gualtieri:2007ng}, the \emph{generalized metric}\footnote{Various authors disagree on what object is to be given the name ``generalized metric"; we follow the convention of \cite{Gualtieri:2007ng}.} is the particular generalized almost product structure given by
\[
  \mathcal{R}_G = \begin{pmatrix} & G^{-1} \\ G & \end{pmatrix}.
\]
\end{definition}  
Finally, we can compactly describe topological bibranes as follows.
\begin{prop}
\label{prop:generalizedmetric}
The pair $(Y,F)$ with $Y \subseteq M$ and $F$ a closed 2-form on $Y$, satisfy the conditions for a topological bibrane (as defined in the previous section) if and only if  $\mathcal R_G$ acting on a section of $\tau^F Y$  gives back another section of $\tau^F Y$, i.e.
\[
  \mathcal R_G (\tau^F Y) = \tau^F Y
\]
\end{prop}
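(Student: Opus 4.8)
\section*{Proof proposal for Proposition \ref{prop:generalizedmetric}}

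The plan is to translate each side of the claimed equivalence into a statement about pairs of vectors, and then to observe that the two statements coincide verbatim with the content of Proposition \ref{prop:topbibrane}. First I would set up the dictionary between sections of $\tau^F Y$ and the pairs $(u,v)$ appearing in the hypothesis of Proposition \ref{prop:topbibrane}. Given a section $(u,\xi)$ of $\tau^F Y$, the metric isomorphism supplied by $G$ lets me set $v = G^{-1}\xi \in \TM$; then $u \in TY$ and the defining relation $Fu = e^*\xi$ becomes $G(v,w) = F(u,w)$ for all $w \in TY$. Conversely, any pair $(u,v)$ with $u \in TY$, $v \in \TM$, and $G(v,w) = F(u,w)$ for all $w \in TY$ arises in this way from the section $(u, Gv)$. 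Thus sections of $\tau^F Y$ are exactly the pairs entering the hypothesis of Proposition \ref{prop:topbibrane}; I would also record that $\tau^F Y$ is a subbundle of rank $2n$ (it fibers over $TY$ with affine fibers modeled on $N^*Y$), while $\mathcal R_G$ is a bundle automorphism of $TM\oplus T^*M$, hence preserves the rank of any subbundle it acts on.

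Next I would compute $\mathcal R_G$ on such a section. Since $\mathcal R_G(u,\xi) = (G^{-1}\xi,\, Gu)$, applying it to the section $(u, Gv)$ produces $(v,\, Gu)$. This generalized vector lies in $\tau^F Y$ precisely when its first component is tangent to $Y$ and $e^*(Gu) = Fv$, i.e. when $v \in TY$ and $G(u,w) = F(v,w)$ for all $w \in TY$ — which is, word for word, the conclusion asserted by Proposition \ref{prop:topbibrane}. Hence $\mathcal R_G(\tau^F Y) \subseteq \tau^F Y$ holds if and only if every pair $(u,v)$ satisfying the hypothesis of that proposition also satisfies its conclusion, that is, if and only if $(Y,F)$ is a topological bibrane.

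Finally I would promote the inclusion to an equality using that $\mathcal R_G$ is an involution, $\mathcal R_G^2 = \mathrm{id}$: once $\mathcal R_G(\tau^F Y) \subseteq \tau^F Y$ is known, applying $\mathcal R_G$ again gives $\tau^F Y = \mathcal R_G^2(\tau^F Y) \subseteq \mathcal R_G(\tau^F Y)$, forcing $\mathcal R_G(\tau^F Y) = \tau^F Y$; alternatively this already follows from the rank count above. I do not expect a genuine obstacle here, since the argument is essentially the bookkeeping of the first paragraph. The one point that demands care is the contraction conventions: $Fu$ denotes the $1$-form $w\mapsto F(u,w)$ obtained by contracting the first slot, and $e^*\xi = Fu$ is an equality of $1$-forms \emph{on $Y$}, so all the identities of the form $G(v,w)=F(u,w)$ must be tested only against $w \in TY$. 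Keeping the first-slot contraction consistent between the definition of $\tau^F Y$ and the relation $G(u,w)=F(v,w)$ — and using that $G$ is symmetric — is exactly what makes the signs line up.
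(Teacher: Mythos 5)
Your proposal is correct and follows essentially the same route as the paper, which treats this proposition as an immediate consequence of the observation that allowed pairs $(u,v)$ are precisely sections of $\tau^F Y$ under $\xi = Gv$: computing $\mathcal R_G(u,Gv)=(v,Gu)$ turns the stabilization condition into exactly the implication in Proposition \ref{prop:topbibrane}, and the involutivity (or rank) argument upgrading the inclusion to an equality is the right finishing touch.
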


\section{Topological defects of the (0,1) supersymmetric sigma model} \label{sec:susy01}

We now proceed to our main topic of interest: supersymmetric, topological defects of the (0,1) supersymmetric sigma model.   In addition to the bosonic fields $\phi^i$ and $\hphi^i$ we now have right-moving fermionic fields $\psip^i$ on $\Sigma$ and $\hpsip^i$ on $\hSigma$.

The bosonic gluing conditions described in section \ref{sec:bosonic} must be supplemented with fermionic gluing conditions on the fields $\psi^I = (\psi^i, \hpsi^i)$, setting half of the total fermionic degrees of freedom to zero the defect line $D$; we accomplish this by constraining $\psi^I$ to take values in an $n$-dimensional subbundle --  which we denote $\Rp$ -- of the $2n$-dimensional bundle $\TM$.    

Indeed, topological bibranes $(Y,F)$ are equipped with a natural middle dimensional subbundle, which we describe as follows
\begin{definition}
Let $(Y,F)$ be a topological bibrane, as characterized in propositions \ref{prop:topbibrane}, \ref{prop:coisotropic}, and \ref{prop:generalizedmetric}. Define $\Rp$ be the following bundle on $Y$:
\begin{equation}   \label{eq:rplus}
  \Rp  =  \{ u \in TY: G(u,w) = F(u,w) \quad \text{for all $w \in TY$} \}
\end{equation}
\end{definition}
Let us see what this subbundle corresponds to in the two special classes of topological bibranes we discussed previously.  In case $(Y,F)$ is space-filling, $\Rp$ is the +1 eigenbundle of the almost product structure $G^{-1} F$.  On the other hand, for graph-of-isometry type bibranes, this subbundle is simply the tangent bundle $TY$ itself.  
\begin{prop} \label{prop:rplus}
The subbundle $\Rp$, as defined above, is Lagrangian with respect to $\TM$.  In particular, it is $n$-dimensional.
\end{prop}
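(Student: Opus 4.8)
The plan is to show two things: that $\Rp$ is isotropic with respect to $G$, and that it has dimension $n$; together these imply $\Rp = \TYperp$ restricted appropriately, i.e.\ that $\Rp$ is Lagrangian in $\TM$. For isotropy, take $u, u' \in \Rp$. By definition $G(u,w) = F(u,w)$ for all $w \in TY$, and in particular, since $u' \in TY$, we may set $w = u'$ to get $G(u, u') = F(u, u')$. Likewise, swapping roles, $G(u', u) = F(u', u)$. Since $G$ is symmetric and $F$ is antisymmetric, adding these gives $2 G(u,u') = F(u,u') + F(u',u) = 0$, so $G(u,u') = 0$. Hence $\Rp \subseteq \TYperp$, and since $\Rp \subseteq TY$, we have $\Rp \subseteq TY \cap \TYperp = \Delta$ at worst — but actually isotropy in $\TM$ is exactly the statement $\Rp \subseteq (\Rp)^\perp$, which is what ``isotropic with respect to $\TM$'' means; I will phrase it that way.

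Next I would compute the dimension. Here I would use the adapted decomposition from Proposition~\ref{prop:coisotropic}: $\TM = \TYperp \oplus SY \oplus NY$ with $TY = \TYperp \oplus SY$, $\ker F = \TYperp$, and $(\tilde G^{-1}\tilde F)^2 = +1$ on $SY$, where $\tilde G = G|_{SY}$, $\tilde F = F|_{SY}$. First observe that $\TYperp \subseteq \Rp$: if $u \in \TYperp$ then $G(u,w) = 0$ for all $w \in TY$, and also $F(u,w) = 0$ since $u \in \ker F$; so the defining condition $G(u,w) = F(u,w)$ holds trivially. Now decompose a general $u \in TY$ as $u = u_0 + u_s$ with $u_0 \in \TYperp$, $u_s \in SY$. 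Testing the condition $G(u,w) = F(u,w)$ against $w \in TY$: on the $\TYperp$ part of $w$ both sides vanish automatically (using $\ker F = \TYperp$ and $\TYperp \subseteq \TYperp$), so the condition reduces to $\tilde G(u_s, w_s) = \tilde F(u_s, w_s)$ for all $w_s \in SY$, i.e.\ $(\tilde G - \tilde F)(u_s, \cdot) = 0$ on $SY$, equivalently $\tilde G(u_s, \cdot) = \tilde F(u_s, \cdot)$, equivalently $u_s = \tilde G^{-1}\tilde F\, u_s$ (after raising an index with $\tilde G$), i.e.\ $u_s$ lies in the $+1$ eigenspace of $P := \tilde G^{-1}\tilde F$. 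Since $P^2 = 1$ and $P$ is $\tilde G$-selfadjoint-up-to-sign in the appropriate sense, $SY$ splits into $\pm 1$ eigenbundles $SY = S_+ \oplus S_-$, and these are the two complementary Lagrangian subbundles of $(SY, \tilde F)$ referred to in the paper; in particular $\dim S_+ = \dim S_- = \tfrac12 \dim SY = k - n$. Therefore $\Rp = \TYperp \oplus S_+$ has dimension $(2n - k) + (k - n) = n$.

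Finally, a Lagrangian subbundle of a neutral-signature $2n$-dimensional bundle is precisely one that is isotropic of dimension $n$ (it then automatically equals its own orthogonal complement), so the two computations above finish the proof. The one point requiring a little care — and the step I expect to be the main obstacle — is the eigenbundle splitting $SY = S_+ \oplus S_-$ with equal-dimensional summands: one must check that $P = \tilde G^{-1}\tilde F$ with $P^2 = 1$ forces the two eigenspaces to have equal dimension, which follows because $\tilde G$ pairs the $+1$ and $-1$ eigenspaces nondegenerately (if $u \in S_+$, $u' \in S_+$ then $\tilde G(u,u') = \tilde F(u,u') = -\tilde F(u',u) = -\tilde G(u',u) = -\tilde G(u,u')$, so $\tilde G$ vanishes on $S_+ \times S_+$, hence each eigenspace is $\tilde G$-isotropic, and since $\tilde G$ is nondegenerate on $SY = S_+ \oplus S_-$ the pairing between $S_+$ and $S_-$ must be perfect, forcing $\dim S_+ = \dim S_-$). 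I would present this as a short lemma or inline remark rather than belabor it. Everything else is bookkeeping with the adapted frame of Proposition~\ref{prop:coisotropic}.
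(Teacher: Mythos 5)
Your proof is correct and follows essentially the same route as the paper's: identify $\Rp = \TYperp \oplus S_+$ via the adapted splitting of Proposition~\ref{prop:coisotropic} (with $S_+$ the $+1$ eigenbundle of $\tilde G^{-1}\tilde F$ on $SY$), count dimensions to get $n$, and deduce isotropy from the symmetry of $G$ versus the antisymmetry of $F$. The only divergence is the sub-step $\dim S_+ = \dim S_-$: the paper gets it from tracelessness of $\tilde G^{-1}\tilde F$ (a product of a symmetric and an antisymmetric matrix), whereas you use the $\tilde G$-isotropy of each eigenbundle plus nondegeneracy of $\tilde G$ to force a perfect pairing between them; both arguments are valid.
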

Given $(Y,F)$ satisfying the geometric topological bibrane conditions, we impose the bosonic gluing conditions written previously, and we augment these with condition that the fermions take values in $\Rp$; moreover, we require that $\Rp$ be an integrable distribution on $Y$.  For completeness, we record here the full set of gluing conditions:
\begin{align*}
\phi &\in Y \\
G(v,w) &= F(u,w) \\
G(s,w) &= F(s,w) 
\end{align*}
for all $w \in TY$, where $u^I \equiv \partial_0 \phi^I$, $v^I \equiv \partial_1 \phi^I$, and $s^I \equiv \psi^I$.

Let us now show that these gluing conditions define a topological, supersymmetry-preserving defect.   We write an explicit action for the $(0,1)$ supersymmetric sigma model with defect (see \cite{Melnikov:2003zv}):
\begin{align*}
  S &= \int_{\Sigma} d^2 \sigma \Big(-\frac{1}{2} g_{ij} \partial_\mu \phi^i \partial^\mu \phi^j 
  -\frac{1}{2} b_{ij} \epsilon^{\mu \nu} \partial_\mu \phi^i \partial_\nu \phi^j + \frac{i}{2} g_{ij} \psi_+^i D_- \psi_+^j \Big) \\
    &+ \int_{\hSigma} d^2 \sigma \Big(-\frac{1}{2} \hg_{ij} \partial_\mu \hphi^i \partial^\mu \hphi^j  -\frac{1}{2} \hb_{ij} \epsilon^{\mu \nu} \partial_\mu \hphi^i \partial_\nu \hphi^j + \frac{i}{2} \hg_{ij} \hpsi_+^i \hD_- \hpsi_+^j \Big) + \int_{D} d \sigma^0 \cA_I \partial_0 \phi^I
\end{align*}
where $\partial_\pm = \partial_0 \pm \partial_1$ and the covariant derivatives are given by
\begin{align*}
  D_{\pm} \psip^i &= \partial_{\pm} \psip^i + \partial_{\pm} \phi^j \Gamma^i_{jk} \psip^k \\
  \hD_{\pm} \hpsip^i &= \partial_{\pm} \hpsip^i + \partial_{\pm} \hphi^j \hat \Gamma^i_{jk} \hpsip^k
\end{align*}
Varying the action and picking out the term localized on $D$, one obtains
\[
  (\delta S)_D = \int_{D} d \sigma^0 \Big( -G_{IJ} \partial_1 \phi^I \delta \phi^J
        + F_{IJ} \partial_0 \phi^I \delta \phi^J  - \frac{i}{2} G_{IJ} \psip^I \delta^{\rm cov} \psip^J \Big) 
\]  
where $G$ and $F$ are as before and we have introduced the notation
\[ 
  \delta^{\rm cov} \psip^I = \delta \psip^I + \delta \phi^{J} \Gamma^I_{JK} \psip^K
\]
for the \emph{covariant variations}, which, unlike $\delta \psi^I$, transform covariantly under target space coordinate changes.   Here $\Gamma^I_{JK}$ are the coefficients of the Levi-Civita connection of the neutral signature metric $G$ on $X \times \hX$.

The first two terms in $(\delta S)_D$ vanish identically since the bosonic gluing conditions are identical to those in section \ref{sec:bosonic}.  Moreover, the third term in $(\delta S)_D$ can be shown to vanish by appealing to the following
\begin{prop} \label{prop:parallelness}
Let $(M,G)$ be a (pseudo-)Riemannian manifold with Levi-Civita connection $\nabla$.  Let $Y \subseteq M$ be a submanifold equipped with closed 2-form $F$ on its worldvolume, and let $\Rp$ be the subbundle defined in \eqref{eq:rplus}.  If $\Rp$ is integrable, then we have the following:
\[
  (e^*\nabla)_u s \in \Gamma(\Rp)
\]
for all $u \in \Gamma(TY)$ and $s \in \Gamma(\Rp)$.  Here $e^* \nabla$ is the pullback of $\nabla$ under the embedding map $e: Y \to M$.  In words, the subbundle $\Rp$ is $\nabla$-parallel with respect to $TY$.  
\end{prop}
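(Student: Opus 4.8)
The plan is to leverage Proposition~\ref{prop:rplus}: the bundle $\Rp$ is Lagrangian inside $(\TM,G)$, so in particular it is isotropic ($G|_{\Rp}=0$) and coincides with its own $G$-orthogonal complement in $\TM$. Hence $(e^*\nabla)_u s$, which a priori is only a section of $\TM$, will lie in $\Rp$ as soon as we show it is $G$-orthogonal to $\Rp$; so the goal reduces to proving
\[
G\big((e^*\nabla)_u s,\,t\big)=0\qquad\text{for all }u\in\Gamma(TY),\ s,t\in\Gamma(\Rp).
\]
Throughout I would extend sections of $\Rp\subseteq TY$ to vector fields on $M$ that are tangent to $Y$, so that $e^*\nabla$ inherits metric compatibility and torsion-freeness from $\nabla$ and so that Lie brackets of such sections are again sections of $TY$; integrability of $\Rp$ is precisely what guarantees that the bracket of two sections of $\Rp$ is a section of $\Rp$.

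First I would read off an \emph{antisymmetry} relation for free from metric compatibility: since $G(s,t)\equiv 0$ on $Y$ for $s,t\in\Gamma(\Rp)$, differentiating along $u$ gives $0=u\big(G(s,t)\big)=G\big((e^*\nabla)_u s,t\big)+G\big(s,(e^*\nabla)_u t\big)$, hence $G\big((e^*\nabla)_u s,t\big)=-G\big((e^*\nabla)_u t,s\big)$.

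Next I would extract the matching \emph{symmetry} relation from $dF=0$. Evaluating the closedness identity on $(s,t,u)$ with $s,t\in\Gamma(\Rp)$ and $u\in\Gamma(TY)$, I would rewrite every occurrence of $F$ in terms of $G$: by the definition \eqref{eq:rplus} one has $F(r,w)=G(r,w)$ whenever $r\in\Rp$ and $w\in TY$, so $F(t,u)=G(t,u)$, $F(s,u)=G(s,u)$, $F(s,t)=G(s,t)=0$, and $F([s,t],u)=G([s,t],u)$ --- this last step being the one and only place the integrability hypothesis is used, to know $[s,t]\in\Gamma(\Rp)$ --- while $F([s,u],t)=-G([s,u],t)$ and $F([t,u],s)=-G([t,u],s)$ since $[s,u],[t,u]\in\Gamma(TY)$ and $t,s\in\Rp$. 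Expanding each bracket by torsion-freeness ($[X,Y]=\nabla_X Y-\nabla_Y X$) and using metric compatibility to turn derivatives of $G$ into $G$'s of covariant derivatives, the terms containing $\nabla_s t$, $\nabla_t s$, $\nabla_s u$, $\nabla_t u$ cancel pairwise, and the identity collapses to $G\big((e^*\nabla)_u s,t\big)=G\big((e^*\nabla)_u t,s\big)$.

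Comparing the two relations forces $G\big((e^*\nabla)_u s,t\big)=0$ for all $t\in\Gamma(\Rp)$, and since $\Rp$ is Lagrangian this is exactly $(e^*\nabla)_u s\in\Gamma(\Rp)$. I expect the only delicate part to be the bookkeeping in the $dF=0$ step --- tracking the six terms of the exterior derivative and the signs produced by the antisymmetry of $F$ and by torsion-freeness, and handling the three bracket terms consistently; everything else (the reduction to an orthogonality statement, the antisymmetry relation, and the final comparison) is purely structural.
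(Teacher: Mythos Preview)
Your proposal is correct and uses essentially the same ingredients as the paper's proof: reduce to $G\big((e^*\nabla)_u s,t\big)=0$ via the Lagrangian property of $\Rp$, convert $G$ to $F$ using the defining relation of $\Rp$, invoke integrability for the $[s,t]$ term, and finish with $dF=0$. The only difference is packaging: the paper applies the Koszul formula directly to obtain the single identity $2G\big((e^*\nabla)_u s,t\big)=dF(u,s,t)$ (after flipping the sign of the vanishing term $u(F(s,t))$), whereas you split the same computation into an antisymmetry piece from $G|_{\Rp}=0$ and a symmetry piece from $dF=0$.
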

\begin{prop} \label{prop:vanishing}
The term $G_{IJ} \psip^I \delta^{\rm cov} \psip^J$ vanishes for all allowed variations.
\end{prop}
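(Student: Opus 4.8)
The plan is to reduce the claim to two facts already in hand: that $\Rp$ is Lagrangian for $G$ (Proposition~\ref{prop:rplus}), so that $G$ restricts to zero on $\Rp$, and that $\Rp$ is parallel along $TY$ (Proposition~\ref{prop:parallelness}). Granting these, it suffices to show that for an allowed variation the covariant variation $\delta^{\rm cov}\psip$ is itself a section of $\Rp$; then $G_{IJ}\psip^I\,\delta^{\rm cov}\psip^J = G(\psip,\delta^{\rm cov}\psip)$ is a pairing of two sections of the isotropic bundle $\Rp$ and therefore vanishes (the Grassmann parity of the coefficients plays no role in this conclusion).

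So the main step I would carry out is the identification of $\delta^{\rm cov}\psip$. First recall what ``allowed variation'' means here: the bosonic gluing conditions force $w := \delta\phi \in TY$, and the fermionic gluing condition $G(s,w) = F(s,w)$ for all $w\in TY$ is, by \eqref{eq:rplus}, exactly the statement that $s = \psip$ takes values in $\Rp$ --- a constraint an allowed variation must preserve as the base point moves along $Y$. I would then pick a local frame $\{e_a\}_{a=1}^n$ of $\Rp$, expand $\psip^I = f^a e_a^I$ with the $f^a$ the genuine (Grassmann-odd) fermionic degrees of freedom, and use that $\delta^{\rm cov}$ is the covariant derivative along the deformation vector $\delta\phi$, obeying the Leibniz rule:
\[
  \delta^{\rm cov}\psip \;=\; (\delta f^a)\,e_a \;+\; f^a\,(e^*\nabla)_{\delta\phi}\,e_a .
\]
The first term lies in $\Rp$ by construction; for the second, $\delta\phi\in TY$ and $e_a\in\Gamma(\Rp)$, and $\Rp$ is integrable by hypothesis, so Proposition~\ref{prop:parallelness} gives $(e^*\nabla)_{\delta\phi}e_a\in\Gamma(\Rp)$. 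Hence $\delta^{\rm cov}\psip\in\Gamma(\Rp)$, and combining with the Lagrangian property of $\Rp$ finishes the proof.

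Within this argument there is no serious obstacle: the real content sits in Proposition~\ref{prop:parallelness}, which I may assume. The one point requiring a little care is the bookkeeping above --- checking that both the ``frame-coefficient'' piece $(\delta f^a)e_a$ and the ``moving-frame'' piece $f^a(e^*\nabla)_{\delta\phi}e_a$ land in $\Rp$, and in particular that the inhomogeneous Christoffel term built into $\delta^{\rm cov}$ is precisely what converts $\delta\phi^J\partial_J e_a^I$ into the covariant derivative $(e^*\nabla)_{\delta\phi}e_a$ to which Proposition~\ref{prop:parallelness} applies.
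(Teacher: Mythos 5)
Your proof is correct and follows essentially the same route as the paper: both arguments reduce the claim to showing $\delta^{\rm cov}\psip\in\Gamma(\Rp)$ via Proposition~\ref{prop:parallelness} and then invoke the isotropy (Lagrangian property) of $\Rp$ with respect to $G$. The only cosmetic difference is that the paper varies the gluing condition written with the reflection matrix $R$ and projector $P=\tfrac12(1+R)$, whereas you expand $\psip$ in a local frame of $\Rp$ and apply the Leibniz rule --- the same computation in different notation.
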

Note that in case $F=0$, the subbundle $\Rp$  equals $TY$, and the above proposition reduces to the surprising result that the covariant derivative of sections of $TY$ are again in $TY$, i.e. Lagrangian submanifolds are automatically \emph{totally geodesic}, a result also obtained in \cite{Bejancu:1995p230}, \cite{Alekseevsky:2009p273}.  

We now check that the above gluing conditions define a topological defect and that they preserve (0,1) supersymmetry of the theory.  The bosonic portions of the stress-energy tensors automatically glue smoothly by the analysis of section \ref{sec:bosonic}.   It remains to check that the contributions involving fermions glue smoothly; i.e.
\[
  (T^{\, \mu}_{\;\;\; \nu})^f - ({\widehat T}^{\,\mu}_{\;\;\; \nu})^f = 0
\]  
for all $\mu, \nu$.  This is equivalent to the condition
\[
  G(\nabla_u s, s) = 0
\]
where the vector fields $u,s$ are as we have defined previously.  This equation is satisfied, once again by appeal to our parallelness proposition.

Finally, the condition for preservation of supersymmetry is that the 1-component of the supercurrents glue smoothly.  This is equivalent to
\[
  G(s, t) = 0
\]
where $s \equiv \psip$ and $t \equiv \partial_+ \phi$.   This is satisfied by virtue of the fact that both  $\partial_+ \phi$ and $\psip$ are constrained to take values in $\Rp$ by the gluing conditions, and the fact that $\Rp$ is a Lagrangian subbundle.  

\section{Topological defects of the $(0,2)$ supersymmetric sigma model} \label{sec:susy02}

Consider now the case when the target spaces $X$ and $\hX$ are K\"ahler manifolds with K\"ahler forms $\omega$ and $\homega$, complex structures $j$ and $\hj$.  In this case the supersymmetric sigma models on either side of the defects possess $(0,2)$ supersymmetry; let us geometrically classify those topological bibranes $(Y,F)$ that preserve both supercharges.

We impose the gluing conditions written in the previous section and assume that the conditions for $(0,1)$ supersymmetry are met.  Then $(0,2)$ supersymmetry will follow if
\[
  G'^1 - \widehat G'^1 = 0
\]
along $D$, where $G'$ and $\widehat G'$ are the second supercurrents.  This is the condition
\begin{equation} \label{eq:omega}
   \Omega(s,t) = 0
\end{equation}
for all $s,t \in \Rp$, where
\[
  \Omega = G J
\]
is the K\"ahler form on $X \times \hX$ with respect to complex structure $J = j \oplus \hj$.   We can say the following about when \eqref{eq:omega} is satisfied:
\begin{prop}  
The following conditions on topological bibrane $(Y,F)$ are equivalent:
\begin{enumerate}
\item $\Rp$ is an $\Omega$-Lagrangian subbundle.
\item $J(\Rp) = \Rp,$ i.e. the leaves of the foliation by subbundle $\Rp$ are complex submanifolds with respect to complex structure $J = j \oplus \hj$.
\item $Y$ is a complex submanifold of $X \times \hX$ and $F$ has type $(1,1)$ with respect to the induced complex structure on $Y$.
\item  In generalized geometry terms, $\mathcal J_J$ acting on a section of $\tau^F Y$  gives back another section of $\tau^F Y$, where 
\[
  \mathcal J_J = \begin{pmatrix} J &  \\ & -J^\T \end{pmatrix}
\]
is the generalized complex structure associated with complex structure $J$.  
\item $(Y,F)$ is a B-brane with respect to complex structure $J$.
\end{enumerate}
\end{prop}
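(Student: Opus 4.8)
The plan is to split the five conditions into two clusters — $\{1,2\}$, which are statements about the canonical subbundle $\Rp\subseteq\TM$, and $\{3,4,5\}$, which are statements about $TY$ — prove the equivalences inside each cluster and the implication $3\Rightarrow 2$ by short arguments, and reserve the real work for $2\Rightarrow 3$. For $1\Leftrightarrow 2$: by Proposition~\ref{prop:rplus} the bundle $\Rp$ is $G$-Lagrangian, so its $G$-orthogonal complement in $\TM$ is $\Rp$ itself; since $\Omega(u,w)=G(Ju,w)$, the condition $\Omega|_{\Rp}=0$ says exactly that $J\Rp$ is $G$-orthogonal to $\Rp$, i.e.\ $J\Rp\subseteq\Rp$, which by invertibility of $J$ means $J\Rp=\Rp$; the geometric reformulation then follows because $\Rp$ is assumed integrable, $J=j\oplus\hj$ is integrable ($X,\hX$ K\"ahler), and an integrable distribution with $J$-invariant leaf tangent spaces has complex leaves. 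For $3\Leftrightarrow 4$ I would compute directly: a section of $\tau^F Y$ is a pair $(u,\xi)$ with $u\in TY$ and $e^*\xi=Fu$, the ambiguity in $\xi$ being $N^*Y$; applying $\mathcal J_J=\mathrm{diag}(J,-J^\T)$ gives $(Ju,-J^\T\xi)$, and demanding this lie again in $\tau^F Y$ forces (i) $Ju\in TY$, which, having to hold for all $u\in TY$, is the statement that $Y$ is a complex submanifold, and (ii), modulo the $N^*Y$-ambiguity, $F(Ju,w)=-F(u,Jw)$ for all $w\in TY$, which after $w\mapsto Jw$ is equivalent to $F(J\cdot,J\cdot)=F$ on $TY$, i.e.\ $F$ of type $(1,1)$. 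The equivalence $3\Leftrightarrow 5$ is, in the present setting, the geometric definition of a B-brane with respect to $J$ (a complex submanifold carrying a line bundle whose curvature, combined with the $B$-field, is of type $(1,1)$), exactly as the A-brane statement quoted from \cite{Gualtieri:2007ng} in Section~\ref{sec:geometry} stands to condition 3 there. Finally $3\Rightarrow 2$: writing $\Rp=\{u\in TY:(G-F)(u,w)=0\ \forall w\in TY\}$, if $TY$, $G$ and $F$ are all $J$-invariant then $(G-F)(Ju,w)=-(G-F)(u,Jw)=0$ for $u\in\Rp,\,w\in TY$ (using $Jw\in TY$), so $J\Rp=\Rp$.

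The content lies in $2\Rightarrow 3$, and here I would use that the ambient geometry is special: $M=X\times\hX$ with $G=g\oplus-\hg$ and $g,\hg$ positive definite, so that a $G$-Lagrangian subspace of $TX\oplus T\hX$ projects isomorphically onto each factor. Thus along $Y$ the bundle $\Rp$ is the graph of a fibrewise linear isometry $f:TX\to T\hX$, and condition 2 is precisely the statement that $f$ is complex linear, $\hj f=fj$. Parametrising $TY=\{(a,c):c-fa\in W\}$ in terms of $f$ and the vertical subbundle $W=TY\cap T\hX$, a direct check gives $JTY=TY$ if and only if $W$ is $\hj$-invariant, equivalently (since $J$ is a $G$-isometry) if and only if $\TYperp$ is $J$-invariant. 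So the remaining task is to propagate $J$-invariance from $\Rp$ to $\TYperp$ — equivalently to all of $TY$ and to the complementary Lagrangian $\mathcal R_-:=\{u\in TY:G(u,w)=-F(u,w)\ \forall w\in TY\}$, which is $G$-Lagrangian by Proposition~\ref{prop:rplus} applied to the topological bibrane $(Y,-F)$ and meets $\Rp$ in $\TYperp$. Once $\Rp$ and $\mathcal R_-$ are both $J$-invariant, $F$ is automatically of type $(1,1)$: $F$ is pinned down by $G$ and the splitting $TY=\Rp+\mathcal R_-$ — it vanishes on $\Rp\times\Rp$ and on $\mathcal R_-\times\mathcal R_-$ and equals $G$ on $\Rp\times\mathcal R_-$ — and on each block $F(J\cdot,J\cdot)=F$ because $J$ preserves $G$.

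Propagating $J$-invariance off $\Rp$ is the step I expect to be the main obstacle, because it is not a pointwise fact: in linear algebra a $G$-Lagrangian complement of a $J$-invariant $G$-Lagrangian need not be $J$-invariant. The global inputs I would use are Proposition~\ref{prop:parallelness} — integrability of $\Rp$ makes $\Rp$ parallel along $TY$ for the Levi-Civita connection $\nabla^G$ — together with the fact that $J$ is $\nabla^G$-parallel, since $\nabla^G=\nabla^g\oplus\nabla^{\hg}$ and $j,\hj$ are parallel on the K\"ahler factors. Feeding $J$-invariance of $\Rp$ through the $\nabla^G$-parallelism of $\Rp$ along $Y$, together with the splitting $TY=\Rp+\mathcal R_-$ and $dF=0$, should force $\mathcal R_-$ (hence $\TYperp$, hence $TY$) to be $J$-invariant and so complete $2\Rightarrow 3$; the delicate point — and the reason this is the hard step — is to run the argument without circularly assuming $Y$ complex, which is exactly the conclusion being extracted.
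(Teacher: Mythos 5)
Your two clusters are handled correctly: $1\Leftrightarrow 2$ follows as you say from Proposition~\ref{prop:rplus} ($\Rp$ is $G$-Lagrangian) together with $\Omega=GJ$; the computation giving $3\Leftrightarrow 4$, the identification $3\Leftrightarrow 5$, and the easy implication $3\Rightarrow 2$ are all fine. The genuine gap is exactly where you flag it: the implication $2\Rightarrow 3$ is never proved. Your final paragraph only conjectures that Proposition~\ref{prop:parallelness}, $\nabla J=0$ and $dF=0$ ``should force'' $J$-invariance of ${\mathcal R}_-$ (hence of $\TYperp$ and $TY$), and the paper's appendix contains no proof of this proposition that you could be matching, so the burden of that step is entirely on you.

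Moreover, the strategy you sketch cannot close it, because there is a flat example in which every input you propose to use holds and the conclusion fails. Take $X=\hX=\RR^4$ (or a flat torus) with the standard flat K\"ahler structure, $je_1=e_2$, $je_3=e_4$, and let $Y=X\times\hX$ be space-filling. Let $B\in SO(4)$ be the isometry $Be_1=e_3$, $Be_3=-e_1$, $Be_2=-e_2$, $Be_4=-e_4$ (no eigenvalue $+1$, and $Bj\neq jB$), and let $R$ be the constant involution of $\TM$ equal to $+1$ on the diagonal $\{(x,x)\}$ and $-1$ on the graph $\{(x,Bx)\}$; both are transverse $G$-Lagrangian subbundles, so $F(u,v):=G(Ru,v)$ is a constant (hence closed) nondegenerate 2-form with $(G^{-1}F)^2=1$, i.e.\ a space-filling topological bibrane. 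From \eqref{eq:rplus}, $\Rp$ is the diagonal: constant, integrable, $\nabla$-parallel and $J$-invariant, so conditions 1 and 2 hold, and $J$ is $\nabla$-parallel, $dF=0$. Yet $F$ is not of type $(1,1)$: one checks $F\bigl((e_1,0),(e_3,0)\bigr)=-1$ while $F\bigl(J(e_1,0),J(e_3,0)\bigr)=F\bigl((e_2,0),(e_4,0)\bigr)=0$; equivalently ${\mathcal R}_-=\{(x,Bx)\}$ is not $J$-invariant, so conditions 3--5 fail. Hence with only the standing hypotheses of sections \ref{sec:susy01}--\ref{sec:susy02} (topological bibrane, $\Rp$ integrable, K\"ahler targets), conditions 1--2 are strictly weaker than 3--5, and no propagation argument of the kind you outline can bridge them: to get $2\Rightarrow 3$ you must identify and invoke an additional hypothesis (in effect, $J$-invariance of $TY$ or of ${\mathcal R}_-$, e.g.\ by demanding that the full set of gluing conditions, not merely the second supercurrent, be compatible with the second supersymmetry), after which your $3\Rightarrow 2$ computation does the remaining work.
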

The generalized metric $\mathcal{R}_G$ and the generalized complex structures $\mathcal J_\Omega$ and $\mathcal J_J$ are related by
\[
  \mathcal{R}_G = -\mathcal J_\Omega \mathcal J_J
\]
This shows that if the bundle $\tau^F Y$ is stabilized by any two of these, then it is automatically stabilized by the third.  In particular, we have the following
\begin{prop} 
\label{prop:02defects}
The bibrane $(Y,F)$ is a topological, supersymmetry preserving defect of the $(0,2)$ sigma model when it is both an A-brane with respect to symplectic form $\Omega = \omega \oplus -\homega$ and a B-brane with respect to complex structure $J = j \oplus \hj$ on $X \times \hX$ (with suitable integrability conditions).
\end{prop}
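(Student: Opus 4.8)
The plan is to obtain the proposition directly from the generalized‑geometry reformulations already in hand, together with the operator identity $\mathcal R_G=-\mathcal J_\Omega\mathcal J_J$ recorded just above its statement. Recall three facts from the preceding material: by proposition~\ref{prop:generalizedmetric}, $(Y,F)$ is a topological bibrane precisely when $\mathcal R_G(\tau^F Y)=\tau^F Y$; by the equivalence (4)$\Leftrightarrow$(5) in the preceding proposition, the extra datum required for $(0,2)$ supersymmetry — namely $\Omega(s,t)=0$ for all $s,t\in\Rp$ — holds precisely when $(Y,F)$ is a B-brane for $J$, equivalently when $\mathcal J_J(\tau^F Y)=\tau^F Y$; and by the result of \cite{Gualtieri:2007ng} quoted above, $(Y,F)$ is an A-brane for $\Omega$ precisely when $\mathcal J_\Omega(\tau^F Y)=\tau^F Y$. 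So the entire statement reduces to a remark about which pairs of these three stabilization conditions force the third.

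Granting the identity $\mathcal R_G=-\mathcal J_\Omega\mathcal J_J$ from the text (a one‑line block computation using $\Omega=GJ$ and the Kähler relations $J^\T=GJG^{-1}$, $J^2=-1$, whence $\Omega^{-1}J^\T=G^{-1}$ and $\Omega J=-G$), observe that $\mathcal J_\Omega^2=\mathcal J_J^2=-1$ and $\mathcal R_G^2=1$ are all invertible, so each of the three operators equals — up to an overall sign, which acts trivially on any subbundle — the composite of the other two: $\mathcal J_\Omega=\mathcal R_G\mathcal J_J$, $\mathcal J_J=\mathcal J_\Omega\mathcal R_G$, $\mathcal R_G=-\mathcal J_\Omega\mathcal J_J$. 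Hence if any two of $\{\mathcal R_G,\mathcal J_\Omega,\mathcal J_J\}$ carry $\tau^F Y$ into itself, so does the third, and each then restricts to an automorphism of $\tau^F Y$. Applying this: if $(Y,F)$ is simultaneously an A-brane for $\Omega$ and a B-brane for $J$, then $\mathcal J_\Omega$ and $\mathcal J_J$ stabilize $\tau^F Y$, hence so does $\mathcal R_G$, so $(Y,F)$ is a topological bibrane, and — $\mathcal J_J$ stabilizing $\tau^F Y$ — the preceding proposition gives $\Omega(s,t)=0$ on $\Rp$, so all conditions defining a $(0,2)$-supersymmetric topological defect are met. Conversely, such a defect is by construction a topological bibrane ($\mathcal R_G$ stabilizes $\tau^F Y$) additionally satisfying $\Omega|_{\Rp}=0$, hence a B-brane ($\mathcal J_J$ stabilizes $\tau^F Y$), hence — by the identity once more — an A-brane ($\mathcal J_\Omega$ stabilizes $\tau^F Y$).

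The step needing genuine care, and the origin of the parenthetical ``(with suitable integrability conditions)'', is the matching of these fiberwise stabilization statements with the integrability data silently carried by the words ``A-brane'' and ``B-brane'' and by the $(0,1)$ analysis of section~\ref{sec:susy01}: a B-brane requires $Y$ to be a complex submanifold and $F$ of type $(1,1)$, which by (2)$\Leftrightarrow$(3) of the preceding proposition is exactly what $J(\Rp)=\Rp$ together with integrability of $\Rp$ amounts to; an A-brane requires, beyond coisotropy and $(G^{-1}F)^2=1$ on the screen distribution $SY$, integrability of the transverse almost‑complex structure induced by $\Omega^{-1}F$ on $TY/(TY)^\Omega$; and the $(0,1)$ defect already demanded that $\Rp$ be integrable. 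I would therefore spend the bulk of the write‑up verifying that the single package ``$\Rp$ is an integrable, $J$-invariant Lagrangian subbundle'' simultaneously delivers the complex‑submanifold‑and‑$(1,1)$‑form clause on the B‑side and the integrability clause in the definition of an A‑brane, so that no condition is lost in either direction of the equivalence; modulo this bookkeeping the proposition is an immediate corollary of the operator identity $\mathcal R_G=-\mathcal J_\Omega\mathcal J_J$.
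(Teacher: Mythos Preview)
Your proposal is correct and follows exactly the paper's approach: the paper treats this proposition as an immediate corollary of the identity $\mathcal R_G=-\mathcal J_\Omega\mathcal J_J$ and the observation that stabilization of $\tau^F Y$ by any two of $\{\mathcal R_G,\mathcal J_\Omega,\mathcal J_J\}$ implies stabilization by the third, without giving a separate formal proof. Your write-up supplies more detail---including the converse direction and the integrability bookkeeping---than the paper itself does, but the underlying argument is identical.
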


\section{Topological defects and T-duality}  \label{sec:tduality}

We briefly comment on the role our topological defects play in implementing dualities of the bosonic sigma models with targets $X$ and $\hX$.   For concreteness let $X$ and $\hX$ be flat tori and let $Y = X \times \hX$ be a space-filling defect equipped with a line bundle with curvature $\cF = d\cA$; we define the 2-form $F = -\cF - B$, as in section 2.

Topological defects separating theories $X$ and $\hX$ can be fused with those separating $\hX$ and $X'$ to yield topological defects separating $X$ and $X'$.  The invisible defect separating $X$ and $X$ (corresponding to the diagonal $Y \subset X \times X$ and $F=0$) is a unital element with respect to this operation of fusion and an \emph{invertible defect} is one that can be fused with another defect to yield the invisible defect.  On general grounds, an invertible topological defect implements a duality\footnote{See \cite{Davydov:2010rm} for a recent discussion of this point in the context of rational CFTs.}.

Hence, we expect that if the 2-form $\cF$ is chosen such that $(Y,F)$ is an invertible topological defect, then the sigma model with target $X$ will be related to the sigma model with target $\hX$ by a duality transformation.  For tori, the group of duality transformations is $O(n,n;\ZZ)$ (the group of automorphisms of a lattice of signature $(n,n)$) \cite{Giveon:1994fu}.

Let us fix the background data $(g,b)$ and take
\[
  \cF = \begin{pmatrix} f & h \\ -h^\T & \hf  \end{pmatrix}.
\]
where $f, \hf$ are antisymmetric $n \times n$ matrices.  Since $\cF$ is the curvature of a line bundle on the torus $X \times \hX$, its entries are constrained to be integers.  Generically, the determinant of the off-diagonal matrix $h$ can take any integer value; however, let us now restrict the form of $\cF$ by assuming that $\det h = \pm 1$, or equivalently $h \in GL(n,\ZZ)$.  

The topological defect condition for space-filling defects states that $(G^{-1} F)^2 = 1$; writing this out in terms of the blocks, this is equivalent to the following relationship between the background data $(g,b)$ for $X$ and $(\hg, \hb)$ for $\hX$:
\begin{align*}
\hg &= h^\T \big( g - (b+f) g^{-1} (b+f) \big)^{-1} h \\
0 &= (b+f) g^{-1} h + h \hg^{-1}(\hb - \hf) 
\end{align*}
We recognize this as the duality transformation consisting of a shift of the B-field by the matrix $f$, followed by a T-duality (inverting the data $g+b$), followed by a basis change of the lattice generating the torus (parametrized by unimodular matrix $h$), followed by another B-field shift (this time by $\hf$) (see \cite{Giveon:1994fu}, eqs.~(2.4.25),~(2.4.26),~(2.4.39)).  The corresponding duality element is
\[
  \begin{pmatrix} 
    \hf h^{-1} & h^\T + \hf h^{-1} f \\
    h^{-1} & h^{-1} f
  \end{pmatrix}
  \in O(n,n;\ZZ)
\] 
(A similar calculation with $f = \hf = 0$ and $h=1$ is performed in \cite{Sarkissian:2008dq}.)  It is evident from this expression that quantization of the entries is implied by quantization of the entries of $\cF$.

\section{Conclusions} \label{sec:conclusions}

We have provided several equivalent characterizations of topological defects of the bosonic sigma model (``topological bibranes") in propositions \ref{prop:topbibrane},  \ref{prop:coisotropic}, and \ref{prop:generalizedmetric}: on the one hand, topological bibranes are closely related to the A-branes of symplectic geometry (exchanging the symplectic form for a neutral signature metric); on the other hand, one can provide an elegant description of topological bibranes in terms of the generalized metric on the product of the targets.  Dimension $n$ topological bibranes correspond to graphs of isometries $X \to \hX$, and dimension $2n$ topological bibranes are equipped with a (half) para-K\"ahler structure, i.e. a neutral signature metric $G$ together with an endomorphism $R=G^{-1} F$ squaring to the identity, such that $R^\T G R = -G$.

Fermionic fields are incorporated by requiring the fermions to take values in a certain middle dimensional subbundle $\Rp$ with which topological bibranes are equipped.  We have shown the key result that this subbundle is automatically parallel with respect to all of $TY$ (assuming integrability of $\Rp$ and $dF = 0$). 

We remark that the class of topological defects studied in this paper is not broad enough to close under the operation of fusion, due to the fact that we have not allowed our defects to include extra degrees of freedom living on the interface.  For this reason, we leave a detailed study of fusion to future work.  

Another question is to what extent the preservation of the various symmetries (which we have analyzed here on the classical level) carry over to the quantum level.  Presumably, one can argue that the invertible defects are exactly supersymmetric and topological, but it would be interesting to find out which other of our defects have this property.  

The present work was motivated in part by a desire to understand whether something analogous to D-branes can be defined for the heterotic string.  Since the worldsheet of the heterotic string possesses only right-moving fermions, a worldsheet-with-boundary is ruled out, but perhaps not a worldsheet-with-defect.  It would be quite interesting to study applications of the present analysis of defects of the bosonic, (0,1), and (0,2) sigma models to heterotic string theory.  

\appendix

\section{Proofs of selected propositions} \label{app:proofs}

In this appendix we offer proofs of selected propositions discussed in the text.
\begin{prop*}[\ref{prop:stronger}]
If $(Y,F)$ are such that $G(u,u) = -G(v,v)$ for all allowed pairs $(u,v)$, then $v \in TY$ and $(v,u)$ is an allowed pair as well.
\end{prop*}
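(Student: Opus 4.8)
The plan is to work locally at a point $y \in Y$ and unpack what the hypothesis says about the pointwise linear algebra of the pair $(G|_{T_yM}, F|_{T_yY})$. Fix an allowed pair $(u,v)$, so $u \in TY$ and $G(v,w) = F(u,w)$ for all $w \in TY$. The key observation is that the hypothesis $G(u,u) = -G(v,v)$ must hold not just for one such pair but for \emph{all} allowed pairs, and the allowed pairs form an affine family: if $(u,v)$ is allowed and $u' \in TY$, then there is a $v'$ with $(u+u', v+v')$ allowed (choose $v'$ so that $G(v',w) = F(u',w)$ for all $w \in TY$; such $v'$ exists because $w \mapsto F(u',w)$ is a linear functional on $TY$ and $G$ is nondegenerate on $TM$, though $v'$ is only determined modulo $\TYperp$). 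First I would exploit this by polarizing: writing the quadratic identity for the pairs $(u,v)$, $(u',v')$, and $(u+u', v+v')$ and subtracting, one gets $G(u,u') = -G(v,v')$ for all allowed pairs $(u,v)$, $(u',v')$.

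Next I would use the defining relations to rewrite both sides. Since $(u',v')$ is allowed, $G(v,v') = G(v',v)$ can be re-expressed using $G(v',w) = F(u',w)$ with $w=v$ — but this requires $v \in TY$, which is exactly what we are trying to prove, so instead I would push the rewriting onto the other factor: $G(u,u') = F(u,?)$ is not directly available either. The cleaner route is to first establish $v \in TY$. To that end, consider the functional $\ell(w) = G(v,w)$ on $TY$; by the allowed-pair condition $\ell(w) = F(u,w)$, so $\ell$ annihilates $\TYperp$ automatically is not what we need — rather, I claim $v \in TY$ follows from coisotropy of $Y$, which was established in the proof of Proposition~\ref{prop:stronger}'s consequences (the submanifolds underlying topological bibranes are coisotropic). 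Concretely: the polarized identity $G(u,u') = -G(v,v')$ holds for all $u' \in TY$ with an associated $v'$; varying $v'$ within its coset $v' + \TYperp$ changes the right side by $G(v, \TYperp)$, so consistency forces $G(v, \delta) = 0$ for all $\delta \in \TYperp$, i.e. $v \in (\TYperp)^\perp$. Since $Y$ is coisotropic, $(\TYperp)^\perp = TY$, giving $v \in TY$.

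With $v \in TY$ in hand, the second claim — that $(v,u)$ is an allowed pair, i.e. $G(u,w) = F(v,w)$ for all $w \in TY$ — follows by taking $w = $ an arbitrary tangent vector in the polarized identity: choosing $(u', v') = (w, \cdot)$ allowed, we have $G(u,w) = -G(v,v')$; and $G(v,v') = G(v',v) = F(w,v) = -F(v,w)$, where the middle step uses $G(v',v) = F(u',v)$ — legitimate now because $v \in TY$ — with $u' = w$. Hence $G(u,w) = F(v,w)$, as desired.

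I expect the main obstacle to be the bookkeeping around the non-uniqueness of $v$ and $v'$ (they are only defined modulo $\TYperp$) and making the polarization argument airtight — specifically, verifying that one genuinely may choose the perturbed pairs to remain allowed, and that the coisotropy of $Y$ (needed to conclude $(\TYperp)^\perp = TY$) is available at this point in the logical development rather than being circular. If coisotropy has not yet been separately established, the alternative is to derive $v \in TY$ directly: the hypothesis, applied to the pair $(0, v_0)$ for any $v_0 \in \TYperp$ (which is allowed since $G(v_0, w) = 0 = F(0,w)$ for $w \in TY$), gives $0 = G(0,0) = -G(v_0,v_0)$, so $G$ vanishes on $\TYperp$; combined with the polarized identity this should still pin down $v \in TY$ without assuming coisotropy a priori, in fact \emph{proving} it along the way.
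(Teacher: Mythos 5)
Your argument is correct and takes essentially the same route as the paper's proof: polarize the quadratic hypothesis to get $G(u,u')=-G(v,v')$, exploit the $\TYperp$-ambiguity in the second member of an allowed pair to force $G(v,\TYperp)=0$ and hence $v\in(\TYperp)^\bot=TY$, then substitute $u'=w$ and use $v\in TY$ to conclude $(v,u)$ is allowed. The circularity you worry about is not actually an issue, since $(\TYperp)^\bot=TY$ follows from nondegeneracy of $G$ on $\TM$ alone (coisotropy is the separate inclusion $\TYperp\subseteq TY$), and your fallback of feeding the pairs $(0,v_0)$, $v_0\in\TYperp$, into the hypothesis is exactly how the paper opens its proof, establishing coisotropy along the way.
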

\begin{proof}
First, we show that $Y$ is necessarily a coisotropic submanifold (in the terminology of section \ref{sec:geometry}).  This is because $v_0 \in \TYperp$ implies that $(0,v_0)$ is an allowed pair.  Then
\[
  G(v_0, v_0) = -G(0,0) = 0
\]
This, in turn, implies that $\TYperp$ is an isotropic subspace, since
\[
  G(v_0, v_0 ') = \frac{1}{2} G(v_0 + v_0', v_0 + v_0') = 0
\]
for all $v_0, v_0' \in \TYperp$.  That is to say, $\TYperp \subseteq (TY^\bot)^\bot = TY$, which is what it means for $Y$ to be coisotropic.  

Next, we show that if $(u,v)$ is an allowed pair, then $v$ is in the subspace $TY$.  (The definition of allowed pair requires $u \in TY$ but not, a priori, $v \in TY$.)   Let $v_0 \in \TYperp$.  For any value of a real parameter $t$, it is easy to see that  $(u, v+ t v_0)$ is also an allowed pair.  Hence,
\[
  -G(u,u) = G(v+ t v_0, v + t v_0)
\]
for all $t$.  Differentiating both sides of this equation with respect to $t$, we have
\[
  0 = \frac{d}{dt} G(v + t v_0, v + t v_0) = 2 G(v, v_0)
\]
(since $G(v,v)$ is $t$ independent and $G(v_0, v_0) = 0$).  The choice of $v_0 \in \TYperp$ was arbitrary, so
\[
  v \in (\TYperp)^\bot = TY
\]
Finally, we show that $(u,v)$ an allowed pair implies $(v,u)$ is an allowed pair as well.  By linearity, if $(u,v)$ and $(u', v')$ are allowed pairs, then $(u+u', v+v')$ is also an allowed pair.  By assumption,
\[
  G(u+u', u+u') = -G(v+v',v+v') 
\]
which, using $G(u,u) = -G(u',u')$ and $G(u', u') = -G(v', v')$, implies that
\begin{equation} \label{eq:Guv}
  G(u, u') = -G(v, v') .
\end{equation} 
Let $(u,v)$ be an allowed pair and $w \in TY$ be arbitrary.  It is not hard to see that there exists a $v' \in TY$ such that $(w,v')$ is an allowed pair.  Setting $u' = w$ in \eqref{eq:Guv} above, one finds that
\[
G(u,w) = -G(v,v') = -G(v', v) = -F(w,v) = F(v,w)
\] 
We conclude that $(v,u)$ is also an allowed pair.
\end{proof}

\begin{prop*}[\ref{prop:coisotropic}] 
The pair $(Y,F)$ define a topological bibrane if and only if $Y \subseteq X \times \hX$ is a coisotropic submanifold such that $\ker F = \TYperp$ (i.e. the degenerate directions of $F$ and $G|_{TY}$ coincide) and, additionally,
\[
  (\tilde G^{-1} \tilde F)^2 = +1
\]
on $SY$, where $\tilde G \equiv G|_{SY}$ and  $\tilde F \equiv F|_{SY}$.    
\end{prop*}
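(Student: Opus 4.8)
The plan is to prove the two implications separately, in each case reducing the condition in Proposition~\ref{prop:topbibrane} to a statement about the endomorphism $\tilde G^{-1}\tilde F$ of the screen bundle $SY$, and working throughout in the adapted splitting $\TM = \TYperp \oplus SY \oplus NY$. The computation to carry out once and then invoke repeatedly is a \emph{reduction to the screen}: when $Y$ is coisotropic the metric $G$ vanishes on $\TYperp$, and when moreover $\ker F = \TYperp$ the form $F$ vanishes on $\TYperp$ as well, so for any $a, b, w \in TY$ with respective $SY$-components $a_1, b_1, w_1$ one has $G(a, w) = \tilde G(a_1, w_1)$ and $F(b, w) = \tilde F(b_1, w_1)$ --- the $\TYperp$-components simply drop out. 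Consequently the relation ``$G(v, w) = F(u, w)$ for all $w \in TY$'' (with $u, v \in TY$) is equivalent to the single equation $v_1 = \tilde G^{-1}\tilde F\, u_1$ in $SY$. Here one uses that $\tilde G = G|_{SY}$ and $\tilde F = F|_{SY}$ are nondegenerate, which holds because $\TYperp$ is precisely the radical of $G|_{TY}$ and, by hypothesis, precisely $\ker F$, while $SY$ is a complement to $\TYperp$ inside $TY$.

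For the forward implication, assume $(Y, F)$ is a topological bibrane. Coisotropy is immediate: given $v_0 \in \TYperp$, the pair $(u,v) = (0, v_0)$ meets the hypothesis of Proposition~\ref{prop:topbibrane}, since $G(v_0, w) = 0 = F(0, w)$ for all $w \in TY$, and its conclusion gives $v_0 \in TY$. That same conclusion also gives $F(v_0, w) = 0$ for all $w \in TY$, so $\TYperp \subseteq \ker F$; conversely, for $u_0 \in \ker F$ the pair $(u_0, 0)$ meets the hypothesis and its conclusion forces $G(u_0, w) = 0$ for all $w \in TY$, so $\ker F \subseteq \TYperp$; hence $\ker F = \TYperp$. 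Finally, for $s \in SY$ set $x = \tilde G^{-1}\tilde F\, s \in SY$; the reduction shows $(s, x)$ meets the hypothesis of Proposition~\ref{prop:topbibrane}, and reading off the $SY$-component of the conclusion ``$G(s, w) = F(x, w)$ for all $w \in TY$'' gives $\tilde G^{-1}\tilde F\, x = s$, i.e.\ $(\tilde G^{-1}\tilde F)^2 s = s$. As $s$ was arbitrary, $(\tilde G^{-1}\tilde F)^2 = +1$ on $SY$.

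For the converse, assume $Y$ is coisotropic, $\ker F = \TYperp$, and $(\tilde G^{-1}\tilde F)^2 = +1$ on $SY$, and take $u \in TY$, $v \in \TM$ with $G(v, w) = F(u, w)$ for all $w \in TY$. Testing against $w \in \TYperp = \ker F$ gives $G(v, w) = 0$, hence $v \in (\TYperp)^\bot = TY$, so both $u$ and $v$ lie in $TY$ and the reduction applies: it turns the hypothesis into $v_1 = \tilde G^{-1}\tilde F\, u_1$ and turns the target statement ``$G(u, w) = F(v, w)$ for all $w \in TY$'' into $u_1 = \tilde G^{-1}\tilde F\, v_1$. Since $\tilde G^{-1}\tilde F\, v_1 = (\tilde G^{-1}\tilde F)^2 u_1 = u_1$, this holds, and the proof is complete.

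I do not expect a deep obstacle; the work is mostly in arranging the bookkeeping so that the condition visibly localizes to the screen bundle $SY$. The steps most prone to slips are establishing $\ker F = \TYperp$ in the forward direction and verifying that the candidate pair $(s, \tilde G^{-1}\tilde F\, s)$ really does satisfy the hypothesis of Proposition~\ref{prop:topbibrane}; both hinge on the observation that $\TYperp$-components contribute nothing to $G$ or $F$ when these are evaluated on pairs drawn from $TY$. One must also record at the outset why $\tilde G^{-1}\tilde F$ is well defined --- that is, the nondegeneracy of $G|_{SY}$ and of $F|_{SY}$ --- which is exactly where the hypotheses that $\TYperp$ is the radical of $G|_{TY}$ and that $\ker F = \TYperp$ get used.
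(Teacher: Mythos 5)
Your proof is correct and follows essentially the same route as the paper's: both directions hinge on reducing the bibrane condition of Proposition~\ref{prop:topbibrane} to the screen bundle via the splitting $TY = \TYperp \oplus SY$, with the degenerate pairs $(0,v_0)$ and $(u_0,0)$ yielding coisotropy and $\ker F = \TYperp$, and the condition $(\tilde G^{-1}\tilde F)^2=1$ encoding the swap of allowed pairs on $SY$. The only cosmetic difference is that you derive the ``only if'' facts directly from Proposition~\ref{prop:topbibrane}, where the paper instead cites Proposition~\ref{prop:stronger}.
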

\begin{proof}
To prove ``only if,'' suppose that $(Y,F)$ is a topological bibrane.  Theorem \ref{prop:stronger} tells us that if $(u,0)$ is an allowed pair then $(0,u)$ is an allowed pair as well, and vice versa;  this implies that $\ker F = \TYperp$.   Hence, the degenerate directions of $F$ and $G|_{TY}$ coincide and $\tilde F$ and $\tilde G$ (the restrictions of $G$ and $F$ to the screen distribution $SY$) are both nondegenerate.  

Moreover, if $\tilde u, \tilde v \in SY$, then by definition, $(\tilde u, \tilde v)$ is an allowed pair if and only if $\tilde v = (\tilde G^{-1} \tilde F) \tilde u$.  Since $(u, (\tilde G^{-1} \tilde F) \tilde u)$ is an allowed pair,  $((\tilde G^{-1} \tilde F) \tilde u,u )$ is an allowed pair as well (again, by theorem \ref{prop:stronger}).  This means that $\tilde u = (\tilde G^{-1} \tilde F)^2 \tilde u$ and since $\tilde u \in SY$ was arbitrary, we have
\[
   (\tilde G^{-1} \tilde F)^2 = 1 
\]
To prove ``if," let $(u,v)$ be an allowed pair; we wish to show that $v \in TY$ and $(v,u)$ is an allowed pair as well.

Let $v_0 \in \TYperp$ be arbitrary.  We then have
\[
  G(v, v_0) = F(v, v_0) = 0
\]
where the first equality is by the definition of allowed pair and the second equality is by the fact that $\ker F = \TYperp$.  Hence $v \in (\TYperp)^\bot = TY$. 

Now, we write $u = u_0 + \tilde u$ and $v = v_0 + \tilde v$ for the decomposition of $TY$ vectors with respect to the splitting $TY = \TYperp \oplus SY$.   Since $F(u,w) = G(v,w)$ for all $w \in TY$, we have $F(\tilde u, \tilde w) = G(\tilde v,\tilde w)$ for all $\tilde w \in SY$, or $\tilde v =  (\tilde G^{-1} \tilde F) \tilde u$.  But $ (\tilde G^{-1} \tilde F)^2 = 1$, so $\tilde u =  (\tilde G^{-1} \tilde F) \tilde v$.  Hence, $F(\tilde v, \tilde w) = G(\tilde u,\tilde w)$ for all $\tilde w \in SY$, and finally $F(v,w) = G(u,w)$ for all $w \in TY$.
\end{proof}
\begin{prop*}[\ref{prop:rplus}] 
The subbundle $\Rp$, as defined above, is Lagrangian with respect to $\TM$.  In particular, it is $n$-dimensional.
\end{prop*}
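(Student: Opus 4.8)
The plan is to show $\Rp$ is isotropic with respect to $G$ and then count dimensions to conclude it is in fact Lagrangian. The definition \eqref{eq:rplus} says $u \in \Rp$ iff $u \in TY$ and $(u,u)$ is an allowed pair (taking the ``velocity'' vector equal to $u$ itself). For the isotropy, suppose $u, u' \in \Rp$. Then by the linearity of the allowed-pair condition, $(u+u', u+u')$ is an allowed pair, and evaluating the Neumann relation with test vector $w = u' \in TY$ gives $G(u,u') = F(u,u')$. But the same relation with the roles reversed, test vector $w = u$, gives $G(u',u) = F(u',u)$. Adding these and using symmetry of $G$ and antisymmetry of $F$ yields $2G(u,u') = F(u,u') + F(u',u) = 0$, so $\Rp \subseteq (\Rp)^\bot$; that is, $\Rp$ is isotropic.

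It remains to establish $\dim \Rp = n$, which will force $\Rp = (\Rp)^\bot$ and hence Lagrangian (recall Lagrangian subbundles are automatically middle-dimensional when the signature is $(n,n)$, which holds here for $G = g \oplus -\hg$). I would do the dimension count using the adapted splitting $TY = \TYperp \oplus SY$ from section \ref{sec:geometry}, together with proposition \ref{prop:coisotropic}: since $(Y,F)$ is a topological bibrane, $\ker F = \TYperp$ and $\tilde R \equiv \tilde G^{-1}\tilde F$ satisfies $\tilde R^2 = 1$ on $SY$. For $u = u_0 + \tilde u \in TY$ decomposed along this splitting, the condition $G(u,w) = F(u,w)$ for all $w \in TY$ splits into a piece tested against $\TYperp$ (automatically satisfied since both sides vanish there: $G(u,v_0)=0$ as $v_0 \in \TYperp$, and $F(u,v_0)=0$ as $\TYperp = \ker F$) and a piece tested against $SY$, namely $\tilde G(\tilde u, \tilde w) = \tilde F(\tilde u,\tilde w)$, i.e. $\tilde u$ lies in the $+1$-eigenspace of $\tilde R$. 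So $\Rp = \TYperp \oplus (SY)_{+}$, where $(SY)_+$ is the $+1$-eigenbundle of the almost product structure $\tilde R$ on $SY$. Now $\dim \TYperp = 2n - k$, and because $\tilde R^2 = 1$ with $\tilde R^\T \tilde G \tilde R = -\tilde G$ (the relation $R^\T G R = -G$ noted in the conclusions, which makes the $\pm 1$-eigenbundles isotropic and forces them to have equal dimension $\tfrac{1}{2}\dim SY$), we get $\dim (SY)_+ = \tfrac{1}{2}(2k - 2n) = k - n$. Hence $\dim \Rp = (2n-k) + (k-n) = n$, as desired.

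The main obstacle is the dimension count for $(SY)_+$: one must verify that the eigenbundles of $\tilde R$ are genuinely equidimensional, which is where the compatibility $\tilde R^\T \tilde G \tilde R = -\tilde G$ (equivalently, $\tilde F$ nondegenerate and $\tilde R$ a para-complex-type structure on the neutral-signature space $(SY, \tilde G)$) is essential — without it an almost product structure could have eigenbundles of unequal rank. Once that is in hand, everything else is formal: isotropy is a two-line computation, and Lagrangian-ness is isotropy plus middle dimension.
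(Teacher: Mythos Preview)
Your proof is correct and follows essentially the same route as the paper's: isotropy via the symmetry/antisymmetry trick $G(s,t)=F(s,t)=-F(t,s)=-G(t,s)$, then the dimension count via the splitting $\Rp=\TYperp\oplus(SY)_+$. The only cosmetic difference is in justifying $\dim(SY)_+=\dim(SY)_-$: the paper observes that $\tilde G^{-1}\tilde F$, being the product of a symmetric and an antisymmetric matrix, is traceless (hence the $\pm1$ eigenvalues occur in equal numbers), whereas you invoke the equivalent relation $\tilde R^\T\tilde G\tilde R=-\tilde G$ to make the eigenbundles isotropic and hence equidimensional.
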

\begin{proof}
It is convenient to first refine the description of the tangent bundles of topological bibranes given in proposition \ref{prop:coisotropic}.  Let $\tilde R$ be the endomorphism $\tilde G^{-1} \tilde F: SY \to SY$, which by proposition \ref{prop:coisotropic}, squares to the identity.  Hence the screen distribution admits a splitting $SY = \tilde {\Rp}  \oplus \tilde{\mathcal R_-}$ into the +1 and -1 eigenbundles of $\tilde R$.  Indeed, since $\tilde G^{-1} \tilde F$ is the product of a symmetric and an antisymmetric matrix, it is traceless, and in particular, exactly half of its eigenvalues are +1 and half are -1.  Hence,
\[
  \dim \tilde{\Rp} = \dim \tilde{\mathcal R_-} = \frac{1}{2}\dim SY = \frac{1}{2}(2k - 2n) = k-n
\]
where $\dim M = 2n$ and $\dim Y = k$ (recall that since $Y$ is coisotropic, $k \geq n$).  From the definition of the bundle $\Rp$, it follows straightforwardly that
\[
  \Rp = \TYperp \oplus \tilde{\Rp}
\] 
Hence, its dimension is
\[
  \dim \Rp =  \dim \TYperp + \dim \tilde{\Rp} =  (2n-k) + (k-n) = n
\]
Moreover, if $s,t \in \Rp$, then
\[
  G(s,t) = F(s,t) = -F(t,s) = -G(t,s) = -G(s,t) = 0
\]
so $\Rp$ is an isotropic subbundle of $\TM$.  (In the first and third equalities above, we have used the definition of $\Rp$  and in the second and fourth equalities we have used the symmetries of $F$ and $G$.)  Since it is middle dimensional and isotropic, it follows that $\Rp$ is Lagrangian with respect to $\TM$.   
\end{proof}

\begin{prop*} [\ref{prop:parallelness}]
Let $(M,G)$ be a (pseudo-)Riemannian manifold with Levi-Civita connection $\nabla$.  Let $Y \subseteq M$ be a submanifold equipped with closed 2-form $F$ on its worldvolume, and let $\Rp$ be the subbundle defined in \eqref{eq:rplus}.  If $\Rp$ is integrable, then we have the following:
\[
  (e^*\nabla)_u s \in \Gamma(\Rp)
\]
for all $u \in \Gamma(TY)$ and $s \in \Gamma(\Rp)$.  Here $e^* \nabla$ is the pullback of $\nabla$ under the embedding map $e: Y \to M$.  In words, the subbundle $\Rp$ is $\nabla$-parallel with respect to $TY$.  
\end{prop*}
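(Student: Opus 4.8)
The plan is to reduce the statement to an orthogonality claim and then to extract that orthogonality from two competing (anti)symmetries of a single trilinear expression. By Proposition~\ref{prop:rplus}, $\Rp$ is Lagrangian in $\TM$, so it coincides with its own $G$-orthogonal complement there; moreover that proof records that $G$ and $F$ both vanish identically on $\Rp$. Writing $\nabla$ for $e^*\nabla$, it therefore suffices to show that
\[
  \Theta(u,s,t) \;:=\; G\big(\nabla_u s,\, t\big) \;=\; 0 \qquad \text{for all } u \in \Gamma(TY),\ s,t \in \Gamma(\Rp),
\]
since then $\nabla_u s$ is $G$-orthogonal to all of $\Rp$, hence lies in $\Rp$, and as $u,s$ were arbitrary sections we conclude $(e^*\nabla)_u s \in \Gamma(\Rp)$. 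The strategy is to prove $\Theta$ is simultaneously antisymmetric and symmetric in its last two slots.

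The antisymmetry is immediate: $G(s,t)\equiv 0$ as a function on $Y$, so compatibility of $\nabla$ with $e^*G$ gives $0 = u\,G(s,t) = \Theta(u,s,t) + \Theta(u,t,s)$. For the symmetry I would use $dF=0$. Expand $0 = dF(u,s,t)$ by the coordinate-free formula for the exterior derivative of a $2$-form; the term $u\,F(s,t)$ drops since $F|_{\Rp}=0$. Each surviving term of $F$ is then converted to $G$ through the defining relation \eqref{eq:rplus}: for $a\in\Gamma(TY)$ and $t\in\Gamma(\Rp)$ one has $F(a,t)=-F(t,a)=-G(t,a)=-G(a,t)$, which applies with $a=u$ and with the brackets $a=[u,s]$, $a=[u,t]$ (tangent to $Y$); and for the remaining term one uses $[s,t]\in\Gamma(\Rp)$ — here, and only here, the integrability hypothesis on $\Rp$ enters — to get $F([s,t],u)=G([s,t],u)$. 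This turns $dF(u,s,t)=0$ into an identity purely among $G$-pairings.

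I would then substitute $[X,Y]=\nabla_X Y-\nabla_Y X$ for each Lie bracket and expand each directional derivative of a $G$-pairing by metric compatibility. In the resulting sum all terms of the shape $G(\nabla_s u,\cdot)$, $G(\nabla_t u,\cdot)$, $G(\nabla_s t,\cdot)$, $G(\nabla_t s,\cdot)$ cancel in pairs, and what survives is exactly $G(\nabla_u s, t) - G(\nabla_u t, s) = 0$, i.e. $\Theta(u,s,t) = \Theta(u,t,s)$. Combining this with the antisymmetry forces $\Theta(u,s,t) = -\Theta(u,s,t)$, hence $\Theta\equiv 0$, which is what we wanted.

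The only real difficulty I anticipate is the sign and cancellation bookkeeping of the middle step; it is entirely mechanical but delicate, and one must keep track that $u$ ranges over all of $TY$ rather than just $\Rp$ (so the conversion $F(u,t)\mapsto -G(u,t)$ genuinely uses the full defining property \eqref{eq:rplus}), that $[u,s]$ and $[u,t]$ are tangent to $Y$ but a priori not in $\Rp$, and that integrability of $\Rp$ is invoked precisely once, for the term $F([s,t],u)$. Nothing beyond $dF=0$ and the Levi-Civita property of $\nabla$ is required, and the argument is valid for any pseudo-Riemannian $(M,G)$.
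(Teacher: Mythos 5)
Your proof is correct and is essentially the paper's own argument: both rest on the identity $2G\big((e^*\nabla)_u s,\,t\big) = dF(u,s,t)$ for $u \in \Gamma(TY)$, $s,t \in \Gamma(\Rp)$, obtained from the defining relation \eqref{eq:rplus}, involutivity used exactly once for the term involving $[s,t]$, and the Levi-Civita properties, with $\Rp^\bot = \Rp$ (Proposition \ref{prop:rplus}) giving the conclusion. The only difference is presentational: the paper starts from the Koszul formula and converts $G$-terms into $\pm F$-terms, while you convert in the opposite direction and recover the same cancellation by splitting $G((e^*\nabla)_u s,t)$ into its symmetric and antisymmetric parts in $(s,t)$.
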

\begin{proof}
We will use the involutivity of $\Rp$ (which follows from integrability) to establish the formula
\begin{equation} \label{eq:dFust}
2 G(  (e^*\nabla)_u s, t) = dF(u,s,t)  
\end{equation}
for all $u \in \Gamma(TY)$ and $s,t \in \Gamma(\Rp)$, keeping in mind that we regard sections of $TY$ and $\Rp$ as sections of the pullback bundle $\TM$, equipped with pullback connection $e^*\nabla$.  Since $dF=0$ by assumption, this formula will imply that 
\[
   (e^*\nabla)_u s \in \Gamma(\Rp ^\bot) =  \Gamma(\Rp)
\]
In the space-filling case, we have noted that our manifold is (half) para-K\"ahler, and this formula reduces to  an equation appearing in Theorem 1 in \cite{Cortes:2003zd}; the proof will follow in a similar vein.

To show \eqref{eq:dFust}, we first write down the Kosczul formula
\[
  2 G(  (e^*\nabla)_u s, t) = u(G(s,t)) + s(G(u,t)) - t(G(u,s)) 
  + G([u,s],t) - G([u,t],s) - G([s,t],u)
\]
Since $s,t \in \Gamma(\Rp)$. we can replace $G$ by $-F$ in all of the terms on the right hand side above except for the last one.  However, as $\Rp$ is involutive, $[s,t] \in \Rp$, so that we can replace $G$ by $+F$ in the last term:
\[
  2 G(  (e^*\nabla)_u s, t) = -u(F(s,t)) - s(F(u,t)) + t(F(u,s)) 
  - F([u,s],t) + F([u,t],s) - F([s,t],u)
\]
On the right hand side, we have the right form for the exterior derivative of $F$, except for the sign of the first term $-u(F(s,t))$.  However, note that $\Rp$ is an isotropic subbundle with respect to both $G$ and $F$, so this term is zero and we can flip its sign.  The equation \eqref{eq:dFust} is proved.  
\end{proof}

\begin{prop*}[\ref{prop:vanishing}]
The term $G_{IJ} \psip^I \delta^{\rm cov} \psip^J$ vanishes for all allowed variations.
\end{prop*}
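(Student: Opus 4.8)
The plan is to establish that, for allowed variations, not only $\psip$ itself but also its covariant variation $\delta^{\rm cov}\psip$ is a section of the subbundle $\Rp$; the proposition then follows immediately from the fact, proved in Proposition~\ref{prop:rplus}, that $\Rp$ is Lagrangian --- hence isotropic --- with respect to $G$, so that $G_{IJ}\psip^I\delta^{\rm cov}\psip^J=G(\psip,\delta^{\rm cov}\psip)=0$.

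First I would spell out the allowed variations. The bosonic gluing data already forces $w\equiv(\delta\phi^I)\partial_I\in\Gamma(TY)$, exactly as in section~\ref{sec:bosonic}. The fermionic gluing condition $s\equiv\psip\in\Rp$ is preserved precisely by those variations arising from a one-parameter family of field configurations, all obeying the gluing conditions, in which the map $\Phi_t\times\hPhi_t|_D$ lands in $Y$ and the fermion, call it $s_t$ (so $s_0=\psip$), is valued in the pullback of $\Rp$ along $D$. Fixing a point $\sigma\in D$, this produces a curve $c(t)=(\Phi_t\times\hPhi_t)(\sigma)$ in $Y$ with $\dot c(0)=w\in T_yY$, together with a section $t\mapsto s_t(\sigma)\in(\Rp)_{c(t)}$ of $\Rp$ along it; and by construction $\delta^{\rm cov}\psip=\tfrac{D}{dt}s_t\big|_{t=0}$ is the covariant derivative of this section, taken with respect to the pullback of the Levi-Civita connection $\nabla$ of $G$.

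Next I would show that this covariant derivative again lands in $\Rp$. Expanding $s_t$ in a local frame $e_1,\dots,e_n$ of $\Rp$ on a neighbourhood of $y$ in $Y$, the derivative $\tfrac{D}{dt}s_t\big|_{t=0}$ is a combination of the $e_i$ (automatically in $\Rp$) together with terms of the form $a^i\,(e^*\nabla)_w e_i$. Since $\Rp$ is assumed integrable and $F$ is closed, Proposition~\ref{prop:parallelness} applies and gives $(e^*\nabla)_w e_i\in\Gamma(\Rp)$ for each $i$ (using $w\in\Gamma(TY)$ and $e_i\in\Gamma(\Rp)$). Hence $\delta^{\rm cov}\psip\in\Gamma(\Rp)$; combined with $\psip\in\Gamma(\Rp)$ and the isotropy of $\Rp$ from Proposition~\ref{prop:rplus}, this yields $G(\psip,\delta^{\rm cov}\psip)=0$, which is the assertion.

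The step needing the most care --- the one I regard as the main obstacle --- is the middle one: correctly recognising $\delta^{\rm cov}\psip$ as a covariant derivative in a direction tangent to $Y$, so that the parallelness Proposition~\ref{prop:parallelness} may legitimately be invoked. This rests on two points that should be checked carefully against the definitions of section~\ref{sec:susy01}: that the variation direction $\delta\phi$ genuinely lies in $TY$ (this is the bosonic allowed-variation condition), and that the fermionic gluing condition is exactly the statement that an allowed variation carries a section of $\Rp$ to another section of $\Rp$, which is what legitimises the frame decomposition used above.
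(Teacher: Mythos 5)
Your proof is correct and follows essentially the same route as the paper: both arguments show that $\delta^{\rm cov}\psip$ again lies in $\Rp$ by invoking the parallelness of $\Rp$ along $TY$ (Proposition~\ref{prop:parallelness}), and then conclude via the isotropy of $\Rp$ from Proposition~\ref{prop:rplus}. The only difference is presentational --- the paper varies the gluing condition written with the reflection matrix $R$ and projectors, while you differentiate a one-parameter family of admissible configurations in a local frame of $\Rp$ --- but the underlying argument is the same.
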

\begin{proof}
We write the fermionic boundary conditions as 
\[
  {R(\phi)^I}_J \psip^J = \psip^I
\]
where $R:\TM \to \TM$ is a $\phi$-dependent matrix squaring to the identity.   Varying the above and writing in terms of the covariant variations, one has
\[
  (\nabla_K {R^I}_J) \delta \phi^K \psip^J + {R^I}_J \delta^{\rm cov} \psip^J = \delta^{\rm cov} \psip^I
\]
or, since $\delta \phi$ is tangent for allowed variations and $\psip$ takes values in $\Rp$,
\[
   (\nabla_K {R^I}_J) {\pi^K}_L {P^J}_M \delta \phi^L \psip^M +  {R^I}_J \delta^{\rm cov} \psip^J = \delta^{\rm cov} \psip^I
\]
where $\pi$ is a projector onto $TY$ and $P = \frac{1}{2}(1+R)$ is the projector onto $\Rp$.   However, by proposition \ref{prop:parallelness}, the first term vanishes.  We then have
\[
  G_{IJ} \psip^I \delta^{\rm cov} \psip^J = (G_{IJ} {P^I}_K {P^J}_L) \psip^K \delta^{\rm cov} \psip^L = 0
\]
since $\Rp$ is an isotropic subbundle.
\end{proof}

%\bibliographystyle{utphys}
%\bibliographystyle{plain}
%\bibliography{newdefects}{}

\end{document}